\def\b{\mathbb}
\def\sinr{\textrm{SINR}}
\def\snr{\textrm{SNR}}
\def\dint{{\rm  d}}
\newtheoremstyle{slplain}
{3pt}
{3pt}
{\slshape}
{}
{\bfseries}
{.}%
{ }
{}
\theoremstyle{slplain}
\newtheorem{cor}{Corollary}
\newtheorem{pro}{Proposition}
\begin{document}

\title{Performance Analysis of Asynchronous Multicarrier Wireless Networks}

\author{
\IEEEauthorblockA{Xingqin Lin, Libin Jiang, and Jeffrey G. Andrews
}
\thanks{
Xingqin Lin and Jeffrey G. Andrews are with the Department of Electrical $\&$ Computer Engineering, The University of Texas at Austin, TX, USA. (Email: xlin@utexas.edu, jandrews@ece.utexas.edu). Libin Jiang is with Qualcomm Flarion Tech., Bridgewater, NJ, USA. (Email: libinj@qti.qualcomm.com). 
Date revised: \today.
}
}

\maketitle

\begin{abstract}
This paper develops a novel analytical framework for \textit{asynchronous} wireless networks deploying multicarrier transmission. Nodes in the network have different notions of timing, so from the viewpoint of a typical receiver, the received signals from different transmitters are asynchronous, leading to a loss of orthogonality between subcarriers.  We first develop a detailed link-level analysis based on OFDM, based on which we propose a tractable system-level signal-to-interference-plus-noise ratio (SINR) model for asynchronous OFDM networks. The proposed model is used to analytically characterize several important statistics in asynchronous networks with spatially distributed transmitters, including (i) the number of decodable transmitters, (ii) the decoding probability of the nearest transmitter, and (iii) the system throughput. The system-level loss from lack of synchronization is quantified, and to mitigate the loss, we compare and discuss four possible solutions including extended cyclic prefix, advanced receiver timing, dynamic receiver timing positioning, and semi-static receiver timing positioning with multiple timing hypotheses.  The model and results are general, and apply to ad hoc networks, cellular systems, and neighbor discovery in device-to-device (D2D) networks.
\end{abstract}

\section{Introduction}

Consider a wireless network in which some nodes are broadcasting multicarrier signals and some nodes are listening. A listening node can decode the signal broadcast by a transmitting node if the received signal-to-interference-plus-noise ratio (SINR) exceeds some detection threshold, which depends on the used modulation and coding scheme. In reality, the various nodes in the network do not have precise synchronization with one another.   Therefore, this paper investigates the following question: if we take a snapshot of the network at a randomly selected time-frequency slot and randomly select a receiving node, then how many (if any) transmitting nodes can be decoded by the selected receiving node given that the network nodes each have different notions of timing? 

The general question posed above is of interest in many wireless networks. For example, in device-to-device (D2D) node discovery, a user equipment (UE) seeks to identify other UEs in its proximity via periodically broadcasting/receiving discovery signals  \cite{lin2013overview}. The number of transmitting devices that can be decoded is an important metric of discovery effectiveness.  A similar metric can be used for neighbor discovery in wireless ad hoc networks \cite{vasudevan2005neighbor}.  Cellular networks are a third important example. For example, in the downlink of a cellular network, how many base stations (BSs)  can be decoded by a typical UE at a given SINR?  This is a key consideration for soft handover or multiple-BS coverage or offloading in dense heterogeneous networks \cite{keeler2013sinr}.

\subsection{Background and Related Work}
\label{subsec:motivation}

The answer to the posed question obviously depends on how the transmitting nodes are spatially distributed.  We assume that the transmitting nodes are distributed according to a Poisson point process (PPP), which has two main advantages: (i) it captures the randomness inherent in the positions of the transmitting nodes (which are usually unknown to a receiving node), and (ii) the PPP has nice properties which make it particularly appealing from an analytical perspective \cite{baccelli2009stochastic}.  The PPP also has been recently shown to accurately model (with small modifications or shifts) a very large class of wireless networks, including even regular grids (with sufficient shadowing) \cite{blaszczyszyn2013using} and most random spatial distributions with a small and constant SINR shift \cite{GuoHae14}.  It is therefore reasonable to assume that the conclusions in this paper also will hold for most plausible network topologies.

Because of its excellent analytical properties, the PPP has found numerous applications in various types of wireless networks including the analysis and optimization of Aloha in wireless ad hoc networks \cite{baccelli2006aloha} and coverage and rate analysis in cellular systems \cite{andrews2011tractable}. More recently, the PPP has been applied to D2D networking including the analysis and design of scheduling in FlashlinQ \cite{wu2013flashlinq, baccelli2012optimizing}, the interaction between D2D and cellular systems \cite{erturk2013distributions, lin2013multicast, lin2013comprehensive}, and message dissemination with intermittently connected D2D links \cite{li2014message}. More applications of the PPP may be found in \cite{haenggi2009stochastic,elsawy2013stochastic} and references therein.

Despite this encouraging progress in applying the PPP to wireless networking, existing works nearly universally assume that the networks are perfectly synchronized. In cellular networks, BSs in different cells may not be synchronized in a Frequency Division Duplex (FDD) deployment, or have synchronization errors in a Time Division Duplex (TDD) deployment. These facts also lead to synchronization issues in D2D discovery. In particular, UEs participating in the discovery are synchronized with their associated BSs and thus may not be synchronized or at best imperfectly synchronized among themselves even when other factors like propagation delays are not considered, let alone the UEs that are out of cellular coverage \cite{lin2013overview}. The synchronous assumption becomes even more questionable when it comes to an ad hoc network in which network-wide synchronization is almost impossible. In such contexts, different transmitters have different notions of timing. From the viewpoint of a typical receiver, which also has its own notion of timing, the multicarrier orthogonal frequency-division multiplexing (OFDM) signals from the transmitters are asynchronous and also do not align with the receiver's timing, leading to a loss of orthogonality between subcarriers. 

The impact of synchronization errors on single-user OFDM has been extensively investigated in the literature (see e.g. \cite{pollet1994ber, schmidl1997robust, speth1999optimum, wang2003ser, mostofi2006mathematical}). Extension of the analysis in single-user OFDM to multiuser OFDM, however, is not straightforward as the latter involves a much larger set of random variables. Analysis of asynchronous OFDM in the uplink of cellular systems includes \cite{tonello2000analysis, el2001ofdm, park2003performance, raghunath2009sir}, while the downlink counterpart may be found in \cite{hamdi2009interference, medjahdi2011performance} and ad hoc networks in \cite{hamdi2010precise}. The works \cite{tonello2000analysis, el2001ofdm, park2003performance, raghunath2009sir} are focused on a single-cell setting and do not consider other-cell interference that plays a key role in system-level performance. In contrast, cochannel interference is modeled and studied in \cite{hamdi2009interference, medjahdi2011performance, hamdi2010precise}. But \cite{hamdi2009interference, medjahdi2011performance, hamdi2010precise} do not  consider or leverage the randomness inherent in the positions of network nodes, and the system-level studies therein are mainly based on Monte Carlo simulations.

\subsection{Main Results and Contributions}

The main goal of this paper is to incorporate the impact of asynchronous OFDM transmissions in the system-level study of wireless networks in which the positions of transmitting nodes are modeled by a PPP. The main contributions and outcomes of this paper are summarized as follows.

\subsubsection{A tractable SINR model for asynchronous OFDM networks} 

We carry out a detailed link-level analysis on the impact of timing misalignment in OFDM transmission. Based on the link-level analysis, we propose a tractable first-order SINR model, which can be conveniently used in system-level studies.

\subsubsection{System-level analysis of asynchronous PPP networks} 

We apply the proposed SINR model to study the system-level performance of asynchronous networks where the locations of transmitting nodes are modeled by a PPP and an OFDM waveform is used. Taking from a typical receiver's point of view, we derive analytical results for the average number of decodable transmitters, the decoding probability of the nearest transmitter, and system throughput. Further, we derive an upper bound on the distribution of the number of decodable transmitters. Note that, according to Palm theory \cite{baccelli2009stochastic}, the statistical performance experienced by a typical receiver is equivalently the spatially averaged performance over all receivers. 
The analysis of perfectly synchronized networks can be treated as a special case of this work. For example, the result on the decoding probability of the nearest transmitter reduces to \cite{andrews2011tractable} that studies a perfectly synchronized cellular network.

\subsubsection{Solutions for mitigating the impact of asynchronous transmissions} 

We compare and discuss four possible solutions including extended cyclic prefix, advanced receiver timing, dynamic receiver timing positioning, and semi-static receiver timing positioning with multiple timing hypotheses. These solutions, detailed in Section \ref{sec:solutions}, differ in complexity and may be applicable in different scenarios for mitigating the loss due to asynchronous transmissions.

The rest of this paper is organized as follows. Section \ref{sec:model} describes the system model. In Section \ref{sec:sinr}, we propose a tractable SINR model for asynchronous OFDM networks. System-level analysis is carried out in Sections \ref{sec:on}. Section \ref{sec:solutions} presents four possible solutions for mitigating the impact of asynchronous transmissions, and is followed by our concluding remarks in Section \ref{sec:conclusion}.
Numerical and/or simulation results are presented throughout the paper to help understand the various analytical results and build intuition.




\section{System Model}
\label{sec:model}

We consider a network in which transmitters use an OFDM waveform. The baseband equivalent time-domain signal $s_i (t)$ emitted by transmitter $i$ can be written as
\begin{align}
s_i (t) = & \sqrt{E_{i}} \sum_{m=-\infty}^{\infty} \frac{1}{N} \sum_{k }  S_i [k; m] \notag \\
&\times e^{ j 2 \pi \frac{k}{T} (t - mT_s) } \mathbb{I}_{[-T_{cp}, T_d)} ( t- mT_s) ,
\label{eq:111}
\end{align}
where $E_{i}$ denotes the transmit energy per sample of transmitter $i$, $m$ is the OFDM symbol index, $N$ denotes the total number of subcarriers, $k$ is the subcarrier index, $S_i[k;m]$ denotes transmitter $i$'s data symbol on the $k$-th subcarrier during the $m$-th OFDM symbol, $T_s = T_d +T_{cp}$ denotes the duration of an OFDM symbol with $T_d$ denoting the duration of the data part and $T_{cp}$ the duration of the cyclic prefix, and $\mathbb{I}_A (t)$ is an indicator function: it equals $1$ if $t\in A$ and zero otherwise. The data symbols $\{S_i [k; m]\}$ are complex and assumed to be independent and identically distributed (i.i.d.) with zero mean and unit variance.

As indicated in Section \ref{subsec:motivation}, we are interested in asynchronous scenarios where different transmitters have different notions of timing and so do the receivers. The more commonly studied synchronous scenarios where all the nodes are synchronized is a special case of this model. In an asynchronous network, we are interested in what a typical receiver ``sees'' at a random time-frequency resource unit. Note that the spectral width can be arbitrary. It can be a complete OFDM channel or a subband of an OFDM channel. In the latter case, transmitter $i$ simply puts zero-valued data symbols $S_i [k; m]$ on the unused subcarriers, as in OFDMA.

The active transmitters at the time-frequency resource unit in question  are assumed to be randomly distributed according to a PPP $\Phi$ with density $\lambda$. The location of transmitter $i \in \Phi$ is denoted by $X_i$. Note that our model does not preclude the possibility that there may be other transmitters active at some other time-frequency resource units. For example, we may consider a super PPP $\Phi' \supseteq \Phi$, where $\Phi'$ denotes the set of all the nodes in the network, and a time-frequency grid composed of orthogonal time-frequency resource units.\footnote{We ignore possible leakages from other time-frequency resource units when considering a particular time-frequency resource unit.} Each node randomly selects a time-frequency resource unit and transmits an OFDM waveform. Then the active transmitters at a randomly selected time-frequency resource unit constitute a PPP $\Phi$, thinned from the super PPP $\Phi'$. This described random access scheme is in fact part of the D2D discovery design used in LTE Direct \cite{3gppD2dRF}.

In this asynchronous network, we will study system-level questions such as the number of transmitting nodes that can be decoded by a typical receiver. To this end, since the transmitter process is stationary, we may assume without loss of generality that the typical receiver is located at the origin. Further, we consider flat-fading OFDM channels, i.e., the multipath spreads are small (w.r.t. sampling period). The last assumption holds for example in the following three scenarios: (1) there are not many obstacles in the radio environment and the arrival times of the multipaths are not resolvable at the receiver; (2) the received signal power is dominated by a single path, e.g. the line-of-sight path if it exists; and (3) the transmit signal is restricted to a flat-fading subband of a frequency-selective channel, as in OFDMA. We leave the important extension to frequency-selective OFDM channels as future work.

More specific modeling assumptions related to the system-level study will be given in Section \ref{sec:on}. 



\section{Tractable SINR Model for Asynchronous Networks}
\label{sec:sinr}

\subsection{Link-Level Timing Misalignment Analysis}

In this subsection, we analyze the impact of timing misalignment from a link-level perspective. Though similar analysis may be found in the rich OFDM literature (see e.g. \cite{speth1999optimum}), we briefly revisit this analysis to motivate our proposed SINR model that captures the impact of asynchronous transmission. To this end, we shall focus on the link between transmitter $i$ and the typical receiver and ignore the signals from the other transmitters for now.

Note that the $n$-th time-domain sample of the $m$-th OFDM symbol from the signal $s_i (t)$ is given by
\begin{align}
&s_i[n;m] = s_i \left(mT_s + n \frac{T_d}{N} \right) \notag \\
&=  \frac{\sqrt{E_{i}}}{N} \sum_{k} S_i [k; m] e^{ j 2 \pi \frac{k}{N} n } ,   n= -N_{cp},...,N-1,
\label{eq:13}
\end{align}
where $N_{cp} = N T_{cp} / T_d$ is the number of cyclic prefix samples.
Denote by $D_i$ the timing misalignment between transmitter $i$ and the typical receiver. Without loss of generality, we assume $D_i \in \mathbb{D} \triangleq [-(N+N_{cp}), N+N_{cp} )$.\footnote{This assumption can be easily relaxed by using different notations $m$ and $m'$ to respectively index OFDM symbols at the transmitter and at the receiver in the following analysis.}

In each OFDM symbol $m$, the typical receiver would like to decode the $m$-th OFDM symbol sent by transmitter $i$. To this end, it discards the first $N_{cp}$ samples falling in the current receiving window and performs a fast Fourier transform (FFT) on the remaining $N$ samples. We consider the following four cases, in which for notational simplicity we drop the additive noise term and assume that the channel gain is $1$ unless otherwise noted.

\textbf{Case 1: $-(N+N_{cp}) \leq D_i < -N$.} The $N$ samples used for the FFT of the $m$-th OFDM symbol are
\begin{align}
y[n;m] = s_i[n - D_i - N - N_{cp}; m+1],  n=0,...,N-1.
\label{eq:12}
\end{align}
The received signal on the $\ell$-th subcarrier during the $m$-th OFDM symbol is given by
\begin{align}
Y[\ell; m] 
&= \sqrt{E_{i}} e^{ j 2 \pi \frac{\ell}{N} (-D_i - N_{cp}) } S_i [\ell; m+1] ,
\label{eq:0}
\end{align}
which is derived in Appendix \ref{app:deri}. 
Thus, the received symbol on the $\ell$-th subcarrier during OFDM symbol time $m$ is just a phase rotated version of the transmitted symbol on the $\ell$-th subcarrier during OFDM symbol time $m+1$. If $S_i[\ell;m]$ is desired, the useful signal power is $0$. Otherwise, transmitter $i$'s signal appears as interference and its interference power (energy/symbol) on the $\ell$-th subcarrier during the $m$-th OFDM symbol equals
\begin{align}
P_{i}[\ell; m] = \mathbb{E} [ | Y[\ell; m] |^2 ] = G_i[m] E_{i} ,
\end{align}
where we have included the effect of channel gain $G_i[m]$ from transmitter $i$ to the typical receiver during OFDM symbol time $m$. Note that $G_i[m]$ is independent of subcarrier $\ell$ as we assume that the channel is flat-fading.

\textbf{Case 2: $-N \leq D_i < 0$.} The $N$ samples used for the FFT of  the $m$-th OFDM symbol are
\begin{align}
&y[n;m] = \notag \\
&\begin{cases}
s_i[-D_i+n; m], & 0\leq n \leq N-1 + D_i ; \\
s_i[n - (N + D_i) - N_{cp}; m+1], & N + D_i \leq n \leq N-1 .
\end{cases}
\label{eq:14}
\end{align}
The received signal on the $\ell$-th subcarrier during the $m$-th OFDM symbol is given by
\begin{align}
&Y[\ell; m] 
= \sqrt{E_{i}}\frac{N + D_i}{N} S_i[\ell;m] e^{- j 2 \pi \frac{\ell}{N} D_i } \notag \\ 
&- \sqrt{E_{i}}\frac{D_i}{N} S_i[\ell;m+1] e^{ j 2 \pi \frac{\ell}{N} (-D_i-N_{cp}) }  \notag \\
&+ \sqrt{E_{i}} \frac{1}{N} \sum_{k \neq \ell} \left( \frac{1- e^{ j 2 \pi \frac{k-\ell}{N}(N + D_i) }}{1- e^{ j 2 \pi \frac{k-\ell}{N} }} \right) \notag \\
&\times \left( S_i [k; m] e^{- j 2 \pi \frac{k}{N} D_i } - S_i [k; m+1] e^{ j 2 \pi \frac{k}{N} (-D_i-N_{cp}) } \right), 
\label{eq:1}
\end{align}
which is derived in Appendix \ref{app:deri}. Thus, the total received power on the $\ell$-th subcarrier during the $m$-th OFDM symbol from transmitter $i$ is
\begin{align}
P_{i}[\ell; m] 
=& G_i[m]E_{i} \bigg( \frac{1}{N^2} \left((N + D_i)^2 + D_i^2\right) \notag \\
&+ \frac{2}{N^2} \sum_{k \neq \ell} \frac{ \sin^2 \left( \pi \frac{N + D_i}{N} (k-\ell) \right)}{ \sin^2 \left( \pi \frac{1}{N} (k-\ell) \right) } \bigg),
\label{eq:2}
\end{align}
where we have used the assumption that $\{S_i [k; m]\}$ are i.i.d. and have zero mean and unit variance.
If $S_i[\ell;m]$ is desired, the useful signal power is $\frac{(N + D_i)^2 }{N^2} G_i[m]E_{i}$; the remaining terms in (\ref{eq:2}) contribute to self-interference including both inter-carrier interference (ICI) and inter-symbol interference (ISI). Otherwise, transmitter $i$'s signal appears as interference whose power is characterized by (\ref{eq:2}).

\textbf{Case 3: $0 \leq D_i < N_{cp}$.} The $N$ samples used for the FFT of  the $m$-th OFDM symbol are
\begin{align}
y[n;m] =
s_i[n - D_i; m], \quad & 0 \leq n \leq N-1.
\end{align}
As in Case 1, we can show that the received signal on the $\ell$-th subcarrier during the $m$-th OFDM symbol is given by
\begin{align}
Y[\ell; m] 
&= \sqrt{E_{i}} S_i [\ell; m] e^{- j 2 \pi \frac{\ell}{N} D_i } .
\end{align}
If $S_i[\ell;m]$ is desired, the useful signal power is $G_i[m] E_{i} $, and there is no self-interference. Otherwise, transmitter $i$'s signal appears as interference with power $G_i[m] E_{i} $.

\textbf{Case 4: $N_{cp} \leq D_i <  N+N_{cp}$.} The $N$ samples used for the FFT of the $m$-th OFDM symbol are
\begin{align}
&y[n;m] = \notag \\
&\begin{cases}
s_i[n + N + N_{cp} - D_i ; m-1], & 0\leq n \leq D_i - N_{cp} -1; \\
s_i[n - D_i; m], & D_i - N_{cp} \leq n \leq N-1.
\end{cases}
\end{align}
As in Case 2, we can show that the received signal on the $\ell$-th subcarrier during the $m$-th OFDM symbol is given by
\begin{align}
&Y[\ell; m]
= \sqrt{E_{i}}\frac{N-D_i+N_{cp}}{N} S_i[\ell;m] e^{- j 2 \pi \frac{\ell}{N} D_i } \notag \\ 
&+ \sqrt{E_{i}}\frac{D_i-N_{cp}}{N} S_i[\ell;m-1] e^{ - j 2 \pi \frac{\ell}{N} (D_i-N_{cp}) }   \notag \\
&+\sqrt{E_{i}} \frac{1}{N} \sum_{k \neq \ell} \left( \frac{1- e^{ j 2 \pi \frac{k-\ell}{N}(D_i-N_{cp}) }}{1- e^{ j 2 \pi \frac{k-\ell}{N} }} \right) \notag \\ 
&\times \left( - S_i [k; m] e^{- j 2 \pi \frac{k}{N} D_i } + S_i [k; m-1] e^{- j 2 \pi \frac{k}{N} (D_i-N_{cp}) } \right).
\end{align}
Thus, the total received power on the $\ell$-th subcarrier during the $m$-th OFDM symbol from transmitter $i$ is
\begin{align}
P_{i}[\ell; m] =& G_i[m]E_{i} \bigg( \frac{1}{N^2} \left((N-D_i + N_{cp})^2 + (D_i - N_{cp})^2\right) \notag \\
&
+ \frac{2}{N^2} \sum_{k \neq \ell} \frac{ \sin^2 \left( \pi \frac{D_i - N_{cp}}{N} (k-\ell) \right)}{ \sin^2 \left( \pi \frac{1}{N} (k-\ell) \right) } \bigg).
\label{eq:3}
\end{align}
If $S_i[\ell;m]$ is desired, the useful signal power is $\frac{(N-D_i + N_{cp})^2 }{N^2} G_i[m]E_{i}$; the remaining terms in (\ref{eq:3}) contribute to self-interference including both ICI and ISI. Otherwise, transmitter $i$'s signal appears as interference whose power is characterized by (\ref{eq:3}).

\subsection{From Link-Level to System-Level Studies}

In this subsection, we discuss how to apply the previous link-level analysis on the impact of timing misalignment to OFDM transmission in system-level studies. In an OFDM system, a transmitter sends a block of coded bits on the used subcarriers.  The probability that the receiver can decode the block sent by transmitter $i$ depends on all the $\sinr$ values of the used subcarriers. Transmitter $i$'s $\sinr$ of subcarrier $\ell$ is given by
\begin{align}
\sinr_i [\ell]  = \frac{ g(D_i)  G_i E_{i} }{ P_i[\ell] - g(D_i)  G_i E_{i}  + \sum_{j\neq i}  P_{j}[\ell]  + N_0 } ,
\label{eq:5}
\end{align}
where we have dropped the OFDM symbol index $m$, $N_0$ denotes the noise power, and
\begin{align}
g(d) = \left\{ 
 \begin{array}{l l}
    0 & \quad -(N+N_{cp}) \leq d< -N; \\
    \frac{(N+d)^2 }{N^2} & \quad -N \leq d < 0; \\
    1 & \quad 0\leq d < N_{cp}; \\
    \frac{(N + N_{cp} - d)^2 }{N^2} & \quad N_{cp} \leq d <  N+N_{cp}.
\end{array} \right.
\label{eq:4}
\end{align}

In a system-level study, the subcarrier $\sinr$ values are usually mapped to a unique $\sinr$, based on which the decision on whether the block is decodable is made. For example, the exponential effective SINR mapping (EESM) is a popular mapping method \cite{tuomaala2005effective}. In an asynchronous network with timing misalignment, the calculation of $\sinr_i [\ell]$ can be difficult because the detailed modeling of timing errors in a system-level study can be cumbersome. Further, the received power $P_{i}[\ell]$ depends on timing misalignment in a delicate way (c.f. (\ref{eq:2}) and (\ref{eq:3})), which makes the analytical evaluation of system-level performance even more challenging.

To solve the above mentioned difficulties, we propose a simple first-order model, which can be conveniently used in system-level studies.

\textbf{System-Level Abstraction.}
\textit{In a system-level study of the asynchronous network with timing misalignment, the subcarrier $\sinr_i [\ell]$ may be approximately calculated as follows.
\begin{enumerate}
\item Model and calculate the timing misalignment $D_i$ between transmitter $i$ and the typical receiver.
\item Calculate the useful signal power as $g(D_i)  G_i E_{i}$, where $g(d)$ is defined in (\ref{eq:4}).
\item Approximate the total received signal power from transmitter $j$ as $P_j[\ell]  = G_j E_j, j = 1,2,...$.
\item Calculate $\sinr_i [\ell]$ according to (\ref{eq:5}).
\end{enumerate}
}

The proposed system-level abstraction has two main advantages: (1) when evaluating $\sinr_i [\ell]$ it only needs to consider the timing misalignment of the receiver with respect to transmitters $i$; and (2) compared to the original complicated expressions (c.f. (\ref{eq:2}) and (\ref{eq:3})), the total received signal power from transmitter $j$ is simply approximated as $P_j[\ell]  = G_j E_j$. These two facts greatly simplify system-level studies.

The validness of the proposed system-level abstraction hinges on the condition that the total received signal power from transmitter $j$ can be well approximately as $P_j[\ell]  = G_j E_j$, regardless of the timing misalignment $D_j$. As shown in a numerical example in Fig. \ref{fig:1}, this approximation is quite accurate: the received powers are almost uniform on the used subcarriers except a few edge subcarriers under various timing misalignment cases. Fig. \ref{fig:2} further shows how the timing misalignment in OFDM transmission affects the power of useful signal as well as the power of self-interference. For example, the received SNR of the central subcarrier would be limited to less than $20$ dB when the receiving window is later than the actual timing of the received signal by $6$ samples (mainly due to the self-interference).

\begin{figure}
\centering
\includegraphics[width=8.5cm]{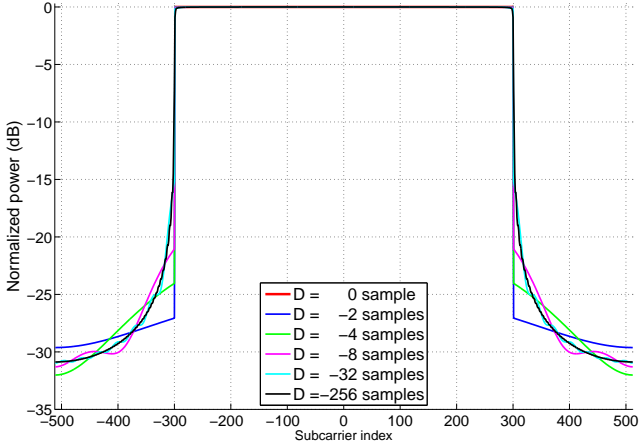}
\caption{Received power of an OFDM signal with timing misalignment. $N=1024; N_{cp}=72$; the used subcarriers are $\{-299,...,0,...300\}$. }
\label{fig:1}
\end{figure}

\begin{figure}
\centering
\includegraphics[width=8.5cm]{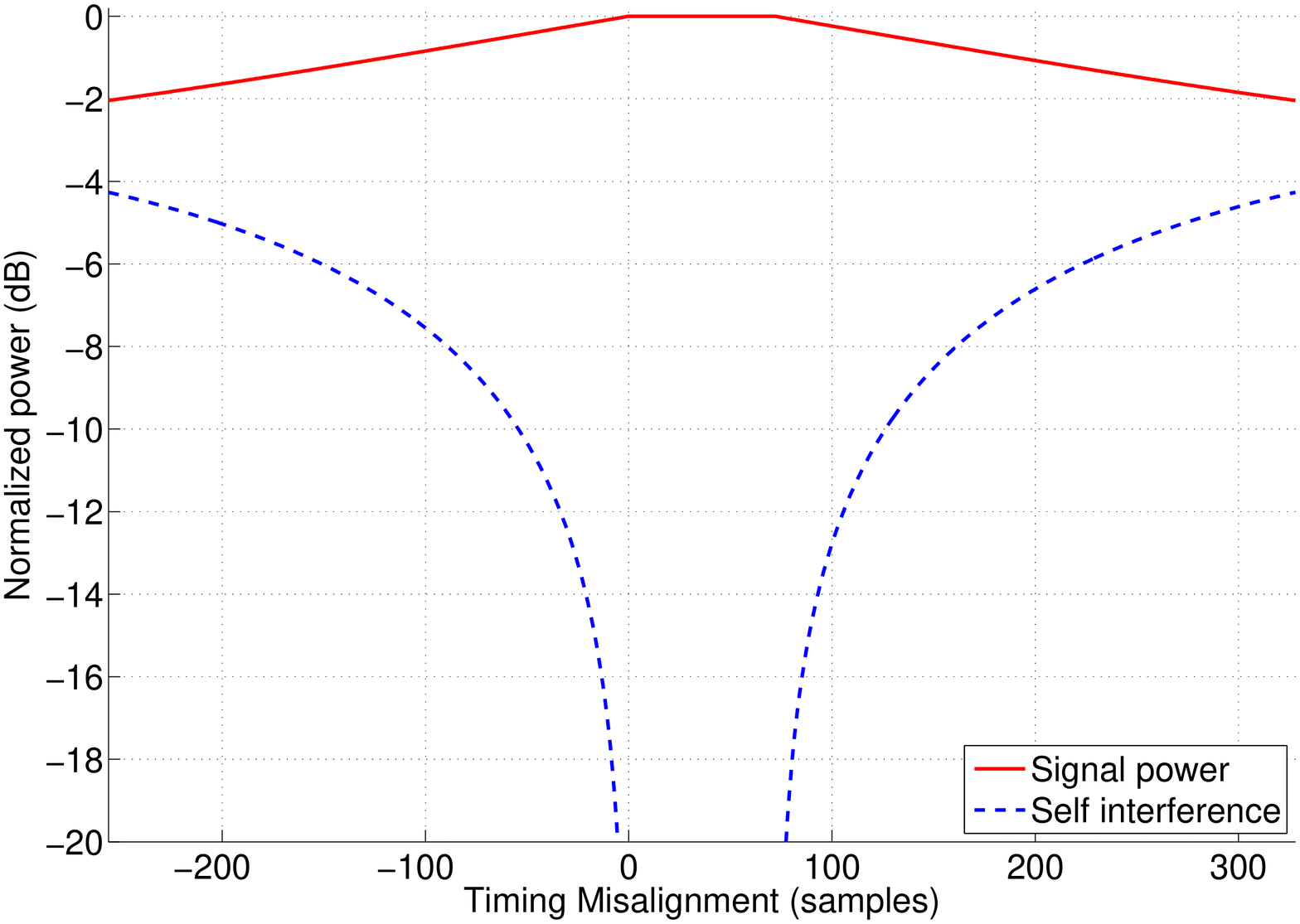}
\caption{Signal and self-interference powers of an OFDM signal received on the central subcarrier with timing misalignment. $N=1024; N_{cp}=72$; the used subcarriers are $\{-300,...,0,...299\}$. }
\label{fig:2}
\end{figure}

\section{On the Decodable Transmitters of a Typical Receiver}
\label{sec:on}

In this section, we apply the proposed system-level abstraction to study several important statistics about the transmitters whose packets can be decoded by the typical receiver in the asynchronous network. Such statistics include the average number of decodable transmitters, the decoding probability of the nearest transmitter, the distribution of the number of decodable transmitters, and system throughput.

To this end, we first notice that with the proposed system-level abstraction, the subcarrier $\sinr_i [\ell]$ now can be written as
\begin{align}
\sinr_i [\ell]  = \frac{ g(D_i)  G_i E_{i} }{ (1- g(D_i))  G_i E_{i}  + \sum_{j\neq i}  G_j E_j  + N_0 }.
\label{eq:6}
\end{align}
Noting that the right hand side of (\ref{eq:6}) is independent of $\ell$, we can simply use the subcarrier $\sinr_i [\ell]$ as the block $\sinr_i$, based on which the decision on whether a packet is decodable can be made. Therefore, in the sequel we drop the subcarrier index $\ell$ in (\ref{eq:6}) and treat it as a block $\sinr$. 

In the following system-level study we assume that (i) transmitters use constant transmit power $E$, (ii) the timing mismatches $\{D_i\}$ are i.i.d. with cumulative distribution function (CDF) $F_D(\cdot)$, and (iii) the channel gain $G_i$ is modeled as
\begin{align}
G_i = \|X_i\|^{-\alpha}  F_{i},
\end{align}
where $\alpha > 2$ is the pathloss exponent, and $F_{i}$ denotes the fading of the link from transmitter $i$ to the typical receiver. For simplicity, we consider independent Rayleigh fading in this paper, i.e., $F_{i} \sim \textrm{Exp} (1)$; more general fading and/or the effect of shadowing may be treated by further applying Displacement theorem for the PPP \cite{blaszczyszyn2013using}, which is not the focus of this paper. With these assumptions, the $\sinr_i$ now can be written as
\begin{align}
\sinr_i   = \frac{ g(D_i)  \|X_i\|^{-\alpha}  F_{i}  }{ (1- g(D_i))  \|X_i\|^{-\alpha}  F_{i}  + \sum_{j\neq i}  \|X_j\|^{-\alpha}  F_{j}  + N_0 /E}.
\end{align}

We let $\mathcal{E}_i$ be the event that a packet from transmitter $i$ is decodable. Then the event $\mathcal{E}_i$ occurs if and only if the received $\sinr_i$ is above some detection threshold $T$, which is a function of the used modulation and coding scheme. Mathematically, the number $\Upsilon$ of decodable transmitters is given by
\begin{align}
\Upsilon = \sum_{i} \mathbb{I} ( \mathcal{E}_i ) = \sum_{i} \mathbb{I} ( \sinr_i \geq T ) ,
\end{align}
where $\mathbb{I} ( \mathcal{E} )$ is an indicator function which equals $1$ if the
event $\mathcal{E}$ is true and $0$ otherwise. Clearly, $\Upsilon$ is a random variable and will be the central object studied in the sequel.

\subsection{Mean Number of Decodable Transmitters}
\label{subsec:mean}

We first consider the average number of decodable transmitters $\mathbb{E}[\Upsilon]$.

\begin{pro}
The mean number of decodable transmitters is given by
\begin{align}
\mathbb{E} [  \Upsilon ] = \pi \lambda & \int_{\mathbb{D}}  \int_0^\infty \mathbb{I} \left( g(\tau) > \frac{T}{1+T} \right)   e^{ - h(\tau, T) \snr^{-1} v^{\frac{\alpha}{2}} } \notag \\
&\times
  e^{  -  \lambda \pi \textrm{sinc}^{-1}\left( \frac{2}{\alpha} \right)  \left ( h(\tau,T) \right)^{\frac{2}{\alpha}} v  }    \dint v   F_D(\dint \tau) ,
\label{eq:7}
\end{align}
where $h(\tau,T) = \frac{T }{ (1+T) g(\tau) - T }$, $\snr = E/N_0$, and $\textrm{sinc} (x) = \frac{\sin (\pi x) }{\pi x}$. 
\label{pro:1}
\end{pro}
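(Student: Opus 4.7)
The plan is to apply the standard Campbell--Mecke machinery for Poisson point processes to the sum of indicators defining $\Upsilon$, and then to exploit the Rayleigh fading assumption to reduce the decoding probability to a Laplace transform of the aggregate interference. More precisely, I first rewrite
\begin{align*}
\mathbb{E}[\Upsilon] = \mathbb{E}\Big[\sum_{i} \mathbb{I}(\sinr_i \geq T)\Big],
\end{align*}
and invoke Slivnyak's theorem together with Campbell's formula to move to an integral over the plane: $\mathbb{E}[\Upsilon] = \lambda \int_{\mathbb{R}^2} \mathbb{P}(\sinr_x \geq T)\, dx$, where the probability is computed assuming a ``typical'' transmitter is placed at $x$ with its own independent timing offset and fading, and the remaining transmitters form an independent copy of $\Phi$ (by Slivnyak). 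Switching to polar coordinates and the substitution $v=r^2$ will produce the outer factor $\pi\lambda$ and the $v^{\alpha/2}$ inside the noise exponential.

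The next step is to unpack the inequality $\sinr_i \geq T$. Rearranging $g(D_i)G_iE \geq T[(1-g(D_i))G_iE + I_i + N_0]$, where $I_i = \sum_{j\neq i}\|X_j\|^{-\alpha}F_j \cdot E$ is the aggregate interference, gives $G_iE\left[(1+T)g(D_i)-T\right] \geq T(I_i + N_0)$. The bracket is positive only when $g(D_i) > T/(1+T)$, which is precisely where the indicator in (\ref{eq:7}) comes from; outside this set the event is impossible. On the event $\{g(D_i) > T/(1+T)\}$ the condition is equivalent to $F_i \geq h(D_i,T)\|X_i\|^{\alpha}(I_i/E + \snr^{-1})$ with $h(\tau,T) = T/[(1+T)g(\tau)-T]$.

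Conditioning on $D_i=\tau$, $\|X_i\|=r$, and on the interference, I can use $F_i \sim \mathrm{Exp}(1)$ to get $\mathbb{P}(F_i \geq a \mid \cdot) = e^{-a}$, which factors cleanly into a noise term $e^{-h(\tau,T)\snr^{-1} r^{\alpha}}$ times the Laplace transform $\mathcal{L}_{I/E}(h(\tau,T)r^{\alpha})$. The Laplace functional of the PPP, combined with the exponential fading, yields
\begin{align*}
\mathcal{L}_{I/E}(s) = \exp\!\Big(-\lambda \int_{\mathbb{R}^2} \frac{s\|y\|^{-\alpha}}{1+s\|y\|^{-\alpha}} dy\Big) = \exp\!\big(-\lambda\pi\, \mathrm{sinc}^{-1}(2/\alpha)\, s^{2/\alpha}\big),
\end{align*}
after the change of variable $u = r^\alpha/s$ and recognizing the resulting beta integral $\int_0^\infty u^{2/\alpha-1}(1+u)^{-1}du = \pi/\sin(2\pi/\alpha)$. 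Setting $s = h(\tau,T) r^{\alpha}$ makes the second exponential in (\ref{eq:7}) appear with the desired $h(\tau,T)^{2/\alpha} v$ after the $v=r^2$ substitution. Finally, averaging over the CDF $F_D$ of the timing mismatch and collecting terms yields the proposition.

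The only genuinely delicate step is the algebraic manipulation that isolates the indicator $\mathbb{I}(g(\tau) > T/(1+T))$ and produces the function $h(\tau,T)$: one must be careful that the self-interference term $(1-g(D_i))G_iF_i$ in the denominator of $\sinr_i$ behaves like an additional signal-dependent term, and that the sign of $(1+T)g(\tau)-T$ controls feasibility. Everything else is a by-now-routine application of the PPP Laplace functional and a change of variables.
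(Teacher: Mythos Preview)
Your proposal is correct and follows essentially the same route as the paper's proof: apply the reduced Campbell formula (Campbell plus Slivnyak) to turn $\mathbb{E}[\Upsilon]$ into a spatial integral, rearrange the SINR inequality to extract the indicator $\mathbb{I}(g(\tau)>T/(1+T))$ and the threshold $h(\tau,T)r^\alpha(I_\Phi+N_0/E)$ on $F_i$, use the exponential tail of $F_i$ to produce the noise exponential and the Laplace transform of $I_\Phi$, evaluate the latter via the PPP Laplace functional, and substitute $v=r^2$. The paper's proof differs only in presentation order (it carries the triple integral in $x,\tau,f$ explicitly before integrating out $f$), not in substance.
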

\begin{proof}
See Appendix \ref{proof:pro:1}.
\end{proof}

To gain some insights from Prop. \ref{pro:1}, we next focus on the special case that the network is interference-limited, i.e., $N_0\to 0$. 
\begin{cor}
In the interference-limited case with $N_0\to 0$, (\ref{eq:7}) reduces to a simpler form:
\begin{align}
\mathbb{E} [  \Upsilon ] 
&= \mathbb{E}_D \left[  \frac{\mathbb{I} \left( g(D) > \frac{T}{1+T} \right)  \textrm{sinc}\left( \frac{2}{\alpha} \right) }{ (h(D,T))^{\frac{2}{\alpha}}  }   \right] ,
\end{align}
which can be upper bounded as
\begin{align}
\mathbb{E} [  \Upsilon ]  \leq \frac{\textrm{sinc}\left( \frac{2}{\alpha} \right) }{ T^{\frac{2}{\alpha}}  } .
\label{eq:17}
\end{align}
\end{cor}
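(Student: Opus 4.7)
The plan is to derive both the closed-form expression and the upper bound directly from Proposition~\ref{pro:1} by specializing~(\ref{eq:7}) to $N_0 \to 0$. First I would take the limit $\snr \to \infty$ inside the double integral. Since $h(\tau,T)\geq 0$ wherever the indicator is nonzero, the factor $e^{-h(\tau,T)\,\snr^{-1} v^{\alpha/2}}$ converges pointwise to $1$ and is uniformly dominated by $1$. The remaining factor $e^{-\lambda\pi\,\textrm{sinc}^{-1}(2/\alpha)(h(\tau,T))^{2/\alpha} v}$ is integrable in $v$ for each fixed $\tau$ on the support of the indicator, so dominated convergence justifies passing the limit through.

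Once the noise exponential has been eliminated, the inner $v$-integral is a single exponential and evaluates to $[\lambda\pi\,\textrm{sinc}^{-1}(2/\alpha)(h(\tau,T))^{2/\alpha}]^{-1}$. Multiplying by the outer $\pi\lambda$ cancels $\lambda$ and $\pi$ and inverts $\textrm{sinc}^{-1}(2/\alpha)$ to $\textrm{sinc}(2/\alpha)$, leaving an integral against $F_D(\dint\tau)$, which I would rewrite as $\mathbb{E}_D[\,\cdot\,]$ to obtain the first claim.

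For the upper bound, I would exploit the fact that $g(\tau)\leq 1$ for all $\tau$, visible from the piecewise definition~(\ref{eq:4}). On the indicator's support, this gives $(1+T)g(\tau)-T \leq 1$ and hence $h(\tau,T)=T/[(1+T)g(\tau)-T] \geq T$, so $(h(\tau,T))^{-2/\alpha} \leq T^{-2/\alpha}$. Bounding the indicator by $1$ then lets me pull the constants outside the expectation and produces $\textrm{sinc}(2/\alpha)/T^{2/\alpha}$, notably independent of $\lambda$ and of the timing distribution $F_D$.

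The overall argument is short and essentially mechanical. I anticipate no genuine obstacle: the two things to check carefully are (i) the dominated-convergence hypothesis (which is immediate from the exponential tail in $v$), and (ii) that $(1+T)g(\tau)-T$ is strictly positive on the set where the indicator is $1$, so that $h(\tau,T)$ is well-defined and positive, and the inequality $h(\tau,T)\geq T$ used for the bound is meaningful.
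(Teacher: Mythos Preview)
Your proposal is correct and follows essentially the same approach as the paper: set $N_0\to 0$ in (\ref{eq:7}), evaluate the resulting exponential $v$-integral, and for the bound use $g(\tau)\leq 1$ to conclude $h(\tau,T)\geq T$ on the indicator's support. The paper leaves the corollary unproved apart from a one-line remark on the bound, so your dominated-convergence justification is simply more careful than what the authors provide.
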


The upper bound (\ref{eq:17}) follows because by definition $g(\tau) \leq 1$ (c.f. (\ref{eq:4})) and thus $h(\tau,T) \geq T$ for all $\tau \in \mathbb{D}$ satisfying $g(\tau) > T/(1+T)$. The upper bound is attained when timing misalignment $D$ is restricted within the range of cyclic prefix. This simple upper bound only depends on two network parameters: $\alpha$ and $T$. In particular, the upper bound decreases as the detection threshold $T$ increases, agreeing with intuition: the mean number of decodable transmitters decreases when the modulation and coding rate are chosen such that $T$ is higher. 

\begin{figure}
\centering
\includegraphics[width=8.5cm]{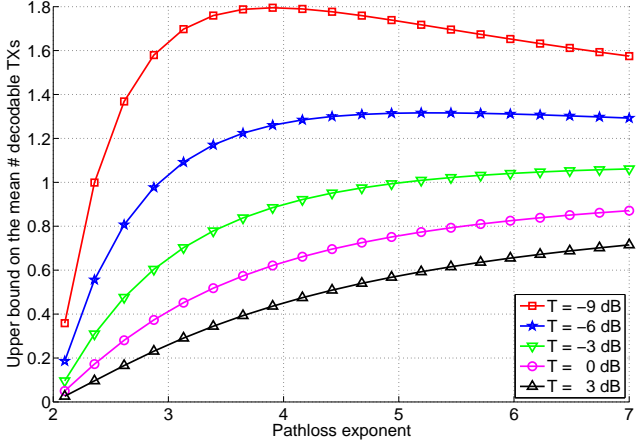}
\caption{The upper bound on the mean number of decodable transmitters (c.f. (\ref{eq:17})) versus pathloss exponent.}
\label{fig:3}
\end{figure}

The dependency of the upper bound on the pathloss exponent $\alpha$ is more complicated and is illustrated in Fig. \ref{fig:3}. 
Note that $\textrm{sinc}\left( \frac{2}{\alpha} \right)$ is increasing with $\alpha \in (2, \infty)$. In contrast,  when $0<T < 1$,  $T^{\frac{2}{\alpha}}$ is increasing with $\alpha \in (2, \infty)$, but when $T\geq 1$,  $T^{\frac{2}{\alpha}}$ is decreasing with $\alpha \in (2, \infty)$. Therefore, when $T \geq 1$, the upper bound increases with $\alpha \in (2, \infty)$. The intuition is that in order to decode packets from more transmitters in the median-to-high modulation and coding rate regime, it is important to reduce the interference power in the interference-limited scenario and thus high pathloss exponent is favorable.  When $0<T < 1$, it is possible that the upper bound first increases and then decreases as the pathloss exponent increases. This is because in the low modulation and coding rate regime, it is also important to preserve the useful signal power while reducing the interference power. In particular, for very low $T$, as $\alpha$ increases beyond some point, the loss of the useful signal power will outweigh the gain of interference reduction and thus the mean number of decodable transmitters will eventually decrease. Another interesting observation from Fig. \ref{fig:3} is that the mean number of decodable transmitters is very small: it is less than $2$ even when $T$ is as low as $-9$ dB. We will explore this fact more in later sections.

Though the above discussion is carried out in the interference-limited case, the overall insights still hold when noise is taken into account. For example, Fig. \ref{fig:33} considers noise (whose power is given in Table \ref{tab:sys:para}) and shows the performance under two two transmitter densities. The dense case with $\lambda = 1/20^2$ m$^{-2}$ is interference-limited; in this case, we can see that the upper bound shown in Fig. \ref{fig:3} is quite close to the true values shown in Fig. \ref{fig:33}. In the sparse case with $\lambda = 1/400^2$ m$^{-2}$ where the noise has a more pronounced effect,  Fig. \ref{fig:33} shows that a moderate pathloss exponent (around $3.3$) is preferred as it strikes a balance between interference reduction and preserving the useful signal power.

\begin{figure}
\centering
\includegraphics[width=8.5cm]{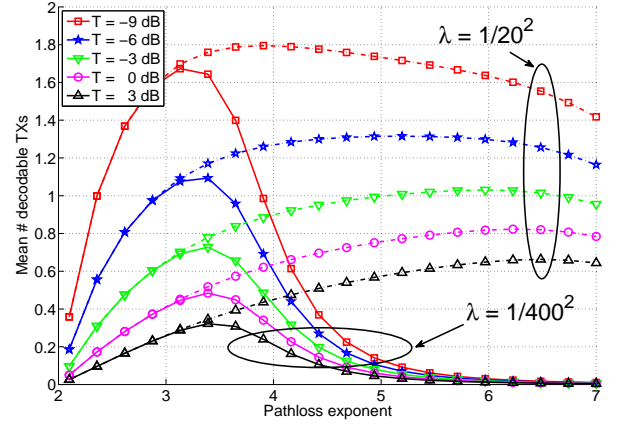}
\caption{Mean number of decodable transmitters versus pathloss exponent in synchronized networks.}
\label{fig:33}
\end{figure}

Next let us turn to the impact of timing misalignment. As expected and shown in (\ref{eq:17}), there is a loss in the mean number of decodable transmitters due to the timing  misalignment. However, if the timing misalignment is restricted within the range of cyclic prefix, i.e., $D \in [0, N_{cp})$, then $g(D) \equiv 1$ and thus the upper bound in (\ref{eq:17}) is attained. In this case, there is no loss due to the timing misalignment. Otherwise, the loss exists and depends on the distribution of the timing misalignment. Note that the integrand in (\ref{eq:7}) is zero if $g(\tau) \leq T/(1+T)$. The physical interpretation is that when $g(\tau) \leq T/(1+T)$, the self interference caused by timing misalignment is already large enough to cause the decoding failure.

To obtain a more concrete understanding of the impact of timing misalignment, we show some numerical results in the sequel. As a null hypothesis, we assume that the distribution of the timing misalignment is Gaussian with mean $0$ and standard deviation $\sigma$ but is truncated within the range $[ -(N+N_{cp}), N+N_{cp} )$. The specific parameters used in plotting numerical or simulation results in this paper are summarized in Table \ref{tab:sys:para} unless otherwise specified. Note that, with the OFDM sampling period normalized to $1$, $N$ denotes the duration of the data part of an OFDM symbol. Accordingly, we normalize timing error deviation $\sigma$ and measure it in terms of $N$, as indicated in Table \ref{tab:sys:para}. 

Fig. \ref{fig:4} shows the mean number of decodable transmitters versus the detection threshold. From Fig. \ref{fig:4}, we can see that asynchronous transmissions have a remarkable effect on the performance; for example, when aiming at decoding one transmitter on average and $\lambda = 1/20^2$ m$^{-2}$, the loss in the supported detection threshold is about $2$ dB (resp. $4$ dB) with $\sigma = 0.2N$ (resp. $\sigma = 0.4N$). Similarly, with the detection threshold $T=-4$ dB, the loss in the mean number of decodable transmitters is $21\%$ (resp. $44\%$) when $\sigma = 0.2N$ (resp. $\sigma = 0.4N$). Fig. \ref{fig:4} also shows that the relative loss in the mean number of decodable transmitters due to asynchronous transmissions increases as the detection threshold increases, implying that asynchronous transmissions have a more significant impact on high-rate communication. Similar observations hold when $\lambda = 1/400^2$ m$^{-2}$. Note that the simulation results clearly match the analysis in Fig. \ref{fig:4}; this provides a sanity check for the derived analytical results.

\begin{table}
\centering
\begin{tabular}{|l||r|} \hline
Tx density $\lambda$   & ${1}/{400^2}$ m$^{-2}$  \\ \hline 
PL exponent  $\alpha$ & $3.8$ \\ \hline
Tx power   & $23$ dBm  \\ \hline
Channel bandwidth    & $10$ MHz \\ \hline
Noise PSD    & $-174$ dBm \\ \hline
Rx noise figure    & $9$ dB \\ \hline
Detection threshold $T$ & $-12$ dB \\ \hline
($N, N_{cp}$)   & (1024, 72) \\ \hline
Timing error deviation $\sigma$  & $0.2N$ \\ \hline
\end{tabular}
\caption{Simulation/Numerical Parameters}
\label{tab:sys:para}
\end{table}

\begin{figure}
\centering
\includegraphics[width=8.5cm]{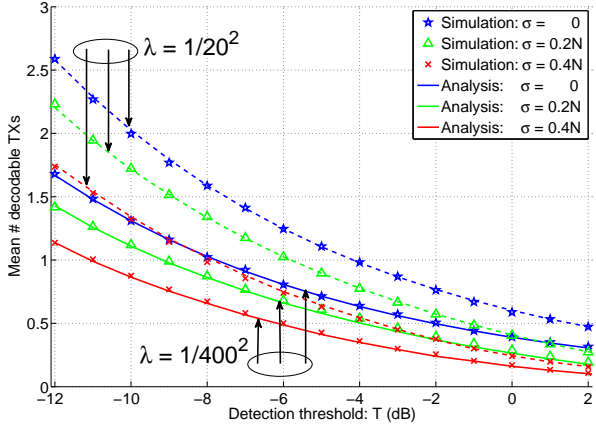}
\caption{Mean number of decodable transmitters versus detection threshold.}
\label{fig:4}
\end{figure}

%

\subsection{An Upper Bound on the Distribution of the Number of Decodable Transmitters}

In the previous subsection, we studied the first order statistic of the number $\Upsilon$ of decodable transmitters. In this subsection, we take a broader view and study the distribution of the number $\Upsilon$ of decodable transmitters. Though an exact characterization is possible, the resulting expressions involve very high dimensional integrals even in the case of perfectly synchronized networks \cite{keeler2013sinr}. Instead, we provide a simple upper bound on the distribution of $\Upsilon$ in the following proposition.
\begin{pro}
The number $\Upsilon$ of decodable transmitters is (first order) stochastically dominated by a truncated Poisson random variable $\Upsilon^{(\textrm{u})}$, i.e., $\mathbb{P}( \Upsilon^{(\textrm{u})} \geq n ) \geq \mathbb{P}( \Upsilon \geq n ),  n=0,1,...$. The distribution of $\Upsilon^{(\textrm{u})}$ is given as follows: $\mathbb{P}( \Upsilon^{(\textrm{u})} = n ) =\frac{1}{C} \frac{\tilde{\lambda}^n}{n!} , n=0,..., \lfloor \frac{1+T}{T} \rfloor$, where 
\begin{align}
\tilde{\lambda} = \pi \lambda \int_0^{\infty} \mathbb{E}_D \left[ \mathbb{I} \left( g(D) > \frac{T}{1+T} \right) e^{ - \frac{T v^{{\alpha}/{2}}}{  g(D) \snr  } }  \right]    \dint v ,
\end{align}
and $C$ is a normalization constant such that $\sum_{n=0}^{\lfloor \frac{1+T}{T} \rfloor} \mathbb{P}( \Upsilon^{(\textrm{u})} = n )=1$.
\label{pro:3}
\end{pro}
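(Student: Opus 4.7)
The plan is to derive the bound in three steps: a deterministic cap on $\Upsilon$, a per-transmitter relaxation that produces a Poisson count, and a combinatorial step converting these ingredients into the claimed stochastic dominance.

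First I would establish the almost-sure cap $\Upsilon \le M := \lfloor (1+T)/T \rfloor$. Rewriting $\sinr_i \ge T$ as $g(D_i) P_i (1+T) \ge T(S + N_0)$ with $P_j := G_j E$ and $S := \sum_j P_j$, summing over the random decodable set $\mathcal{D}$, and using $g(\cdot) \le 1$ to bound $\sum_{i \in \mathcal{D}} g(D_i) P_i \le S$, one obtains $|\mathcal{D}|\, T(S+N_0) \le (1+T) S$, hence $|\mathcal{D}| \le (1+T)/T$ and so $\Upsilon \le M$ almost surely.

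Next I would peel off two per-transmitter necessary conditions for decoding that decouple across $i$. Dropping the other-user interference in $\sinr_i \ge T$ while keeping self-interference forces $g(D_i) > T/(1+T)$, and further dropping self-interference while keeping noise forces $g(D_i) F_i \|X_i\|^{-\alpha} \ge T/\snr$. Both conditions depend only on the individual position $X_i$ and mark $(D_i, F_i)$, so by the independent marking/thinning theorem for PPPs the transmitters satisfying them form a Poisson process whose count $\tilde\Upsilon$ is $\mathrm{Poisson}(\tilde\lambda)$; Campbell's formula with polar substitution $v = \|x\|^2$ and integration of the Rayleigh mark $F$ yields exactly the $\tilde\lambda$ in the statement. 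Since every decodable transmitter satisfies both conditions, $\Upsilon \le \tilde\Upsilon$ pointwise, and in particular $\mathbb{E}\bigl[\binom{\Upsilon}{n}\bigr] \le \tilde\lambda^n/n!$ for every $n$.

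The main obstacle is the third step: converting the a.s.\ truncation $\Upsilon \le M$ and the factorial-moment bound $\mathbb{E}[\binom{\Upsilon}{n}] \le \tilde\lambda^n/n!$ into the claimed stochastic dominance $\mathbb{P}(\Upsilon \ge n) \le \mathbb{P}(\Upsilon^{(\textrm{u})} \ge n)$. The direct pointwise bound $\Upsilon \le \tilde\Upsilon \wedge M$ is strictly \emph{weaker} than the truncated-Poisson dominance, since a capped Poisson stochastically exceeds the truncated Poisson with the same parameter, so the proof must use the hard truncation and the factorial-moment bounds jointly. My plan is to expand $\mathbb{P}(\Upsilon \ge n)$ on the finite support via the inclusion--exclusion identity $\mathbb{P}(\Upsilon \ge n) = \sum_{j=n}^{M}(-1)^{j-n}\binom{j-1}{n-1}\, \mathbb{E}\bigl[\binom{\Upsilon}{j}\bigr]$, plug in the factorial-moment upper bounds, and compare to the truncated-Poisson tail after normalizing by $C = \sum_{k=0}^{M}\tilde\lambda^k/k!$; handling the alternating signs in the expansion (e.g.\ via Bonferroni-type grouping of consecutive terms) is where I expect the technical work to concentrate.
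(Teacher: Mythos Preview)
Your first two steps coincide with the paper's proof: the paper also relaxes the SINR condition by dropping other-user interference to obtain an independently thinned PPP (hence a Poisson count $\tilde\Upsilon$ with mean $\tilde\lambda$), and separately establishes the deterministic cap $\Upsilon\le M:=\lfloor (1+T)/T\rfloor$ by the same summation argument you describe. The paper, however, stops right there and declares the proposition proved, implicitly treating the combination of ``pointwise $\Upsilon\le\tilde\Upsilon$'' and ``$\Upsilon\le M$ a.s.'' as delivering the truncated-Poisson dominance.

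You are right to flag that this combination only yields the capped-Poisson bound $\Upsilon\le\min(\tilde\Upsilon,M)$ a.s., which is strictly weaker than the stated truncated-Poisson dominance. But your proposed repair via factorial moments and inclusion--exclusion cannot close the gap either. The only ingredients you carry into Step~3 are (i) $\Upsilon\le M$ a.s.\ and (ii) $\mathbb{E}\bigl[\binom{\Upsilon}{j}\bigr]\le\tilde\lambda^{\,j}/j!$ for all $j$. The capped Poisson $\Upsilon':=\min(\tilde\Upsilon,M)$ itself satisfies both (i) and (ii) (the latter because $\binom{\cdot}{j}$ is nondecreasing and $\Upsilon'\le\tilde\Upsilon$), yet for every $1\le n\le M$ and $\tilde\lambda>0$ one checks directly that
\[
\mathbb{P}(\Upsilon'\ge n)=\mathbb{P}(\tilde\Upsilon\ge n)\;>\;\frac{1}{C}\sum_{k=n}^{M}\frac{\tilde\lambda^{\,k}}{k!}=\mathbb{P}\bigl(\Upsilon^{(\textrm{u})}\ge n\bigr).
\]
Thus (i) and (ii) alone are \emph{provably insufficient} to imply the truncated-Poisson tail bound, and no Bonferroni-type grouping of the alternating factorial-moment expansion can succeed. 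In short, the paper's argument and your proposal share the same two substantive steps; what the argument actually delivers is dominance by the \emph{capped} Poisson $\min(\tilde\Upsilon,M)$, which is enough for all the downstream conclusions in the paper but is a slightly weaker statement than the truncated-Poisson formulation. Proving the truncated version as stated would require structure beyond the two ingredients you (and the paper) isolate.
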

\begin{proof}
See Appendix \ref{proof:pro:3}.
\end{proof}

To gain some insights from Prop. \ref{pro:3}, we next focus on the special case with $T>1$. Then Prop. \ref{pro:1} implies that $\Upsilon^{(\textrm{u})}$ is a Bernoulli random variable: it equals $1$ with probability ${\tilde{\lambda}}/{(1+\tilde{\lambda})}$ and $0$ otherwise. The mean of  $\Upsilon^{(\textrm{u})}$ is ${\tilde{\lambda}}/{(1+\tilde{\lambda})}$. If the network is very sparse such that $\lambda \sim o(1)$, then ${\tilde{\lambda}}/{(1+\tilde{\lambda})} \sim \tilde{\lambda} = \Theta(\lambda)$. When the transmit power is fixed, the performance of sparse networks is noise-limited. This indicates that in the noise-limited case the probability that the receiver can decode a packet from some transmitter is $O(\lambda)$. So is the mean number of decodable transmitters. In the next subsection, we will show that the probability is $\Omega(\lambda)$ as $\lambda \to 0$, and thus the probability actually scales as $\Theta (\lambda)$.

If the network is very dense, i.e., $\lambda \to \infty$, then ${\tilde{\lambda}}/{(1+\tilde{\lambda})} \sim 1$. Clearly, the performance of dense networks is interference-limited. As a result, one might think that in the interference-limited case the probability that the receiver can decode a packet from some transmitter is close to $1$. The fallacy of the above argument is that ${\tilde{\lambda}}/{(1+\tilde{\lambda})}$ is an upper bound and may not be tight as $\lambda \to \infty$. In fact, the right intuition should be that the received SINR from any transmitter in the interference-limited case would not be large and thus the probability that no transmitter can be decoded can be relatively high if the detection threshold $T$ is large. The last intuition can be further confirmed by examining Fig. \ref{fig:3}. For example, Fig. \ref{fig:3} shows that the mean number of decodable transmitters is less than $0.5$ at $\alpha=4$ and $T=3$ dB, implying that the probability that no transmitter can be decoded is greater than $0.5$.

Note that the parameter $\tilde{\lambda}$ may take more explicit form in some special cases. For example, when $\alpha = 4$, 
\begin{align}
\tilde{\lambda}  = \frac{\pi^{\frac{3}{2}} \lambda }{2} \sqrt{\frac{\snr}{T}}  \mathbb{E}_D \left[   \mathbb{I} \left( g(D) > \frac{T}{1+T} \right) \sqrt{g(D)} \right] .
\end{align}
Therefore, if $T>1$ and $\alpha=4$,  the probability that the receiver can decode a packet from some transmitter is proportional to the square root of $\snr$ in the noise-limited case, agreeing with intuition: the radio link length is proportional to $\snr^{1/4}$ when $\alpha=4$ and thus the decoding probability should be proportional to $\snr^{1/2}$ in $\mathbb{R}^2$. Similar intuition may be used to explain why the probability is inversely proportional to the square root of the detection threshold $T$.

Figs. \ref{fig:6} and \ref{fig:7} compare the analytical upper bound on the distribution of the number $\Upsilon$ of decodable transmitters to the corresponding true distribution obtained from simulation under two different transmitter densities. It can be seen that the analytical upper bound is more accurate when the network is sparser (i.e. less interference-limited).

\begin{figure}
\centering
\includegraphics[width=8.5cm]{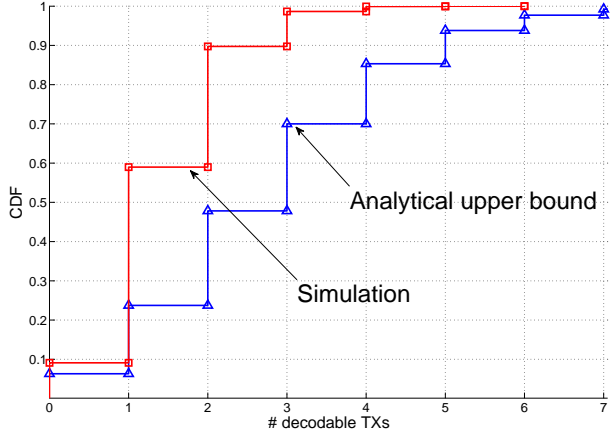}
\caption{Analytical upper bound vs. simulation on the distribution of the number of decodable transmitters: $\lambda = 1/400^2$ m$^{-2}$.}
\label{fig:6}
\end{figure} 

\begin{figure}
\centering
\includegraphics[width=8.5cm]{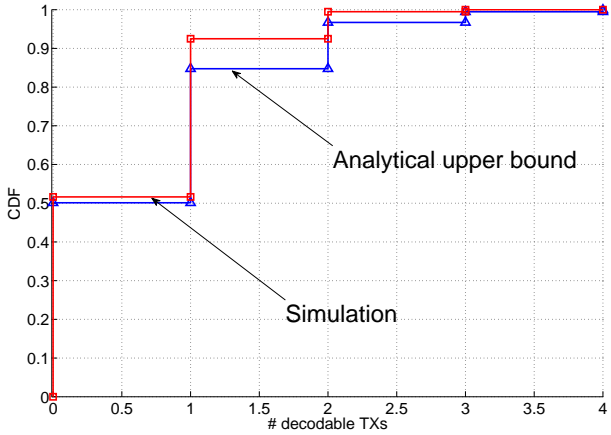}
\caption{Analytical upper bound vs. simulation on the distribution of the number of decodable transmitters: $\lambda = 1/800^2$ m$^{-2}$.}
\label{fig:7}
\end{figure} 

\subsection{On Decoding the Nearest Transmitter}

According to Prop. \ref{pro:3}, the receiver can decode the packet from at most one transmitter if $T>1$. The decodable transmitter is typically the nearest one, though fading and timing misalignment may affect the result. Further, the probability of decoding the nearest transmitter indicates the coverage performance of cellular networks where the positions of BSs are modeled by a PPP \cite{andrews2011tractable}. Therefore, it is of particularly interest to study the probability that the receiver can decode a packet sent by the nearest transmitter. We answer this question in the following Proposition \ref{pro:2}.
\begin{pro}
The probability that the receiver can decode a packet sent by the nearest transmitter $X_0$ is given by
\begin{align}
&\mathbb{P} ( \textrm{SINR}_0 \geq T  )  
=   \pi \lambda \int_{\mathbb{D}} \int_0^\infty \mathbb{I} \left( g(\tau) > \frac{T}{1+T} \right)   \notag \\
&\times e^{ -h(\tau, T) \snr^{-1}  v^{\frac{\alpha}{2}} } e^{ -\pi \lambda ( 1 + \rho (h(\tau, T) ,\alpha ) ) v } \dint v F_D(\dint \tau) ,
\label{eq:11}
\end{align}
where 
$
\rho (x, \alpha) = x^{\frac{2}{\alpha}} \int_{x^{-\frac{2}{\alpha}}}^\infty \frac{1}{1 + v^{\frac{\alpha}{2}} } \dint v 
$, and $h(\tau, T)$ is defined in Prop. \ref{pro:1}.
\label{pro:2}
\end{pro}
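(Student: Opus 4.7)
The plan is to condition on the distance $r = \|X_0\|$ to the nearest transmitter and on its timing offset $\tau = D_0$, then exploit the Rayleigh fading on the direct link to reduce the computation to a Laplace-transform evaluation of the aggregate interference. First I would rearrange the inequality $\sinr_0 \geq T$ into $((1+T)g(\tau) - T) r^{-\alpha} F_0 \geq T(I + \snr^{-1})$, where $I = \sum_{j \neq 0} \|X_j\|^{-\alpha} F_j$ is the interference from all other transmitters. When $g(\tau) \leq T/(1+T)$ the left side is non-positive and the event has probability zero, which produces the indicator $\mathbb{I}(g(\tau) > T/(1+T))$ in (\ref{eq:11}); otherwise the condition becomes $F_0 \geq h(\tau,T) r^\alpha (I + \snr^{-1})$.

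Since $F_0 \sim \textrm{Exp}(1)$ is independent of everything else, conditioning on the interferer configuration gives
\begin{align*}
\mathbb{P}\bigl(F_0 \geq h(\tau,T) r^\alpha (I + \snr^{-1}) \mid I, r, \tau\bigr) = e^{- h(\tau,T) r^\alpha \snr^{-1}}\, e^{- h(\tau,T) r^\alpha I}.
\end{align*}
Averaging over $I$ then reduces the problem to evaluating the Laplace transform $\mathcal{L}_I(s)$ at $s = h(\tau,T) r^\alpha$. By Slivnyak's theorem applied to $\Phi$, conditional on the nearest point being at distance $r$ from the origin, the remaining interferers form a PPP of density $\lambda$ on $\mathbb{R}^2 \setminus B(0,r)$. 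The probability generating functional of the PPP combined with the independent $\textrm{Exp}(1)$ fading yields
\begin{align*}
\mathcal{L}_I(s) = \exp\left( -2\pi \lambda \int_r^{\infty} \frac{s u^{-\alpha}}{1 + s u^{-\alpha}}\, u\, \dint u \right).
\end{align*}

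Setting $s = h(\tau,T) r^\alpha$ and making the substitution $w = (u / s^{1/\alpha})^2$ rewrites the integrand as $1/(1+w^{\alpha/2})$ and moves the lower limit to $h(\tau,T)^{-2/\alpha}$; matching against the definition of $\rho(\cdot,\alpha)$ in the proposition gives $\mathcal{L}_I(h(\tau,T) r^\alpha) = \exp(-\pi \lambda r^2 \rho(h(\tau,T),\alpha))$. Finally, the nearest-point distance of a planar PPP has density $2\pi\lambda r\, e^{-\pi\lambda r^2}$; multiplying the conditional decoding probability by this density, substituting $v = r^2$, and averaging over $F_D(\dint \tau)$ produces (\ref{eq:11}).

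The main technical step is the change of variables that identifies the interference Laplace transform with the $\rho(\cdot,\alpha)$ integral appearing in the proposition; the additional factor $e^{-\pi\lambda v}$ in (\ref{eq:11}) comes simply from the void probability of the PPP that defines the nearest-point distance, and the noise exponential $e^{- h(\tau,T)\snr^{-1} v^{\alpha/2}}$ separates cleanly from the interference calculation thanks to Rayleigh fading. The rest is routine conditioning.
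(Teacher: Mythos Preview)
Your proposal is correct and follows essentially the same route as the paper's proof: condition on the nearest transmitter's distance and timing offset, isolate the indicator $\mathbb{I}(g(\tau)>T/(1+T))$, use the exponential fading to convert the tail probability into a Laplace transform of the interference over the PPP restricted to $B^c(o,r)$ (justified by Slivnyak--Mecke), then integrate against the nearest-point density and $F_D$. The paper defers the Laplace-transform computation to \cite{andrews2011tractable}, whereas you carry out the change of variables explicitly to identify $\rho(h(\tau,T),\alpha)$, but the argument is otherwise the same.
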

\begin{proof}
See Appendix \ref{proof:pro:2}.
\end{proof}

From Prop. \ref{pro:2}, it is easy to see the probability that the receiver can decode a packet sent by the nearest transmitter is $\Theta(\lambda)$ as $\lambda \to 0$. Thus, the probability that the receiver can decode a packet sent by at least one transmitter is $\Omega(\lambda)$ as $\lambda \to 0$. The last fact has been used in the previous section when stating that with $T>1$ the probability that the receiver can decode a packet sent by some transmitter scales as $\Theta (\lambda)$.

When the network is interference-limited, i.e., $N_0\to 0$,  (\ref{eq:11}) reduces to
\begin{align}
\mathbb{P} ( \textrm{SINR}_0 \geq T  )  &= \mathbb{E}_D \left[  \frac{ \mathbb{I} \left( g(D) > \frac{T}{1+T} \right) }{1 + \rho (h(D, T) ,\alpha )  }  \right]  \notag \\
&\leq  \frac{1}{ 1 + \rho(T, \alpha) } ,
\end{align}
where we have used the fact that $h(\tau,T) \geq T$, for all $\tau \in \mathbb{D}$ satisfying $g(\tau) > T/(1+T)$, in the last inequality. The above upper bound is attained when  $D \equiv 0$, i.e., the network is perfectly synchronized, which has been studied in \cite{andrews2011tractable}. In fact, as long as the timing misalignment $D$ is restricted within the range of cyclic prefix, the upper bound can be attained. As in the case of the mean number of decodable transmitters, there is a loss in the probability of decoding the nearest transmitter due to the timing  misalignment, and the loss depends on the distribution of the timing misalignment.

Fig. \ref{fig:8} shows the decoding probability of the nearest transmitter versus the detection threshold. From Fig. \ref{fig:8}, we can see that, when aiming at decoding probability $0.5$ and $\lambda = 1/20^2$ m$^{-2}$, the loss in the supported detection threshold is about $3$ dB (resp. $6$ dB) with $\sigma = 0.2N$ (resp. $\sigma = 0.4N$). Fig. \ref{fig:8} also shows that the impact of asynchronous transmissions becomes more significant as the detection threshold increases. Similar observations hold when $\lambda = 1/400^2$ m$^{-2}$. 

\begin{figure}
\centering
\includegraphics[width=8.5cm]{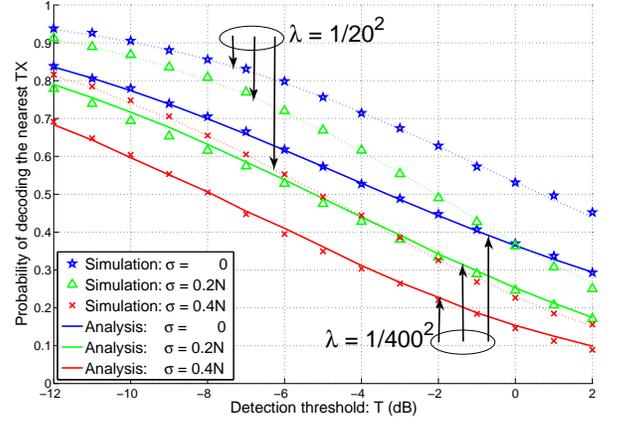}
\caption{Impact of timing  misalignment on the decoding probability of the nearest transmitter.}
\label{fig:8}
\end{figure}

%

\subsection{Optimizing System Throughput}

The average number of decodable transmitters characterized in Prop. \ref{pro:1} is monotonically increasing as the detection threshold $T$ decreases. However, reducing the detection threshold $T$ implies that we adopt lower modulation order and/or coding rate. This may be undesirable from a throughput point of view. In order to take into account this tradeoff, we define \textit{system throughput} $\xi$ as the mean of the sum rate of all the transmitters to the typical receiver. Mathematically,
\begin{align}
\xi = \mathbb{E} \left[ \sum_i \mathbb{I} ( \sinr_i \geq T ) \log (1+T) \right].
\end{align}
With this definition, the following result follows immediately.
\begin{cor}
The system throughput equals
$\xi = \log(1+T)  \mathbb{E}[\Upsilon ]$ with $\mathbb{E}[\Upsilon ]$ given in Prop. \ref{pro:1}.
\end{cor}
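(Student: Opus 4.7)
The plan is to recognize that the stated corollary is an immediate consequence of the definitions, requiring essentially no work beyond rearranging constants and invoking linearity of expectation. The key observation is that the detection threshold $T$ is a fixed system parameter that does not depend on the transmitter index $i$, so the factor $\log(1+T)$ can be pulled outside both the sum and the expectation.

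First I would start from the definition
\begin{align}
\xi = \mathbb{E}\left[ \sum_i \mathbb{I}(\sinr_i \geq T) \log(1+T) \right],
\end{align}
and factor the constant $\log(1+T)$ out of the inner sum. Second, I would apply linearity of expectation (equivalently, pull the constant out of the expectation) to write
\begin{align}
\xi = \log(1+T)\, \mathbb{E}\left[ \sum_i \mathbb{I}(\sinr_i \geq T) \right].
\end{align}
Third, I would recall the definition $\Upsilon = \sum_i \mathbb{I}(\sinr_i \geq T)$ given earlier in Section \ref{sec:on}, which yields $\xi = \log(1+T)\,\mathbb{E}[\Upsilon]$. Finally, I would substitute the closed-form expression for $\mathbb{E}[\Upsilon]$ provided by Proposition \ref{pro:1} to complete the proof.

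There is no real obstacle here; the content of the corollary is essentially bookkeeping. The one implicit point worth briefly checking is that swapping the expectation with the (countable) sum of nonnegative random variables is justified by Tonelli's theorem, which applies without issue since each $\mathbb{I}(\sinr_i \geq T) \log(1+T)$ is nonnegative (for $T \geq 0$). All substantive work has already been done in Proposition \ref{pro:1}.
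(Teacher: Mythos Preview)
Your proposal is correct and matches the paper's approach: the paper states that the result ``follows immediately'' from the definition of $\xi$ and $\Upsilon$, which is precisely the factoring-out-the-constant plus linearity-of-expectation argument you gave. There is nothing more to add.
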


Now we may optimize the detection threshold $T$ by maximizing the system throughput $\xi$. This optimization is of single variable and thus can be solved efficiently. To gain some intuition, we show the system throughput as
a function of $T$ in Fig. \ref{fig:10}. From Fig. \ref{fig:10}, we can see that the optimal detection thresholds are respectively $5$ dB, $-1$ dB and $-3$ dB when $\sigma = 0$, $0.2N$ and $0.4N$. This implies that we have to be more conservative in setting the detection threshold when the networks are asynchronous (vs. synchronized networks). Another interesting observation from Fig. \ref{fig:10} is that the optimal detection thresholds are nearly unaffected by the transmitter density.

\begin{figure}
\centering
\includegraphics[width=8.5cm]{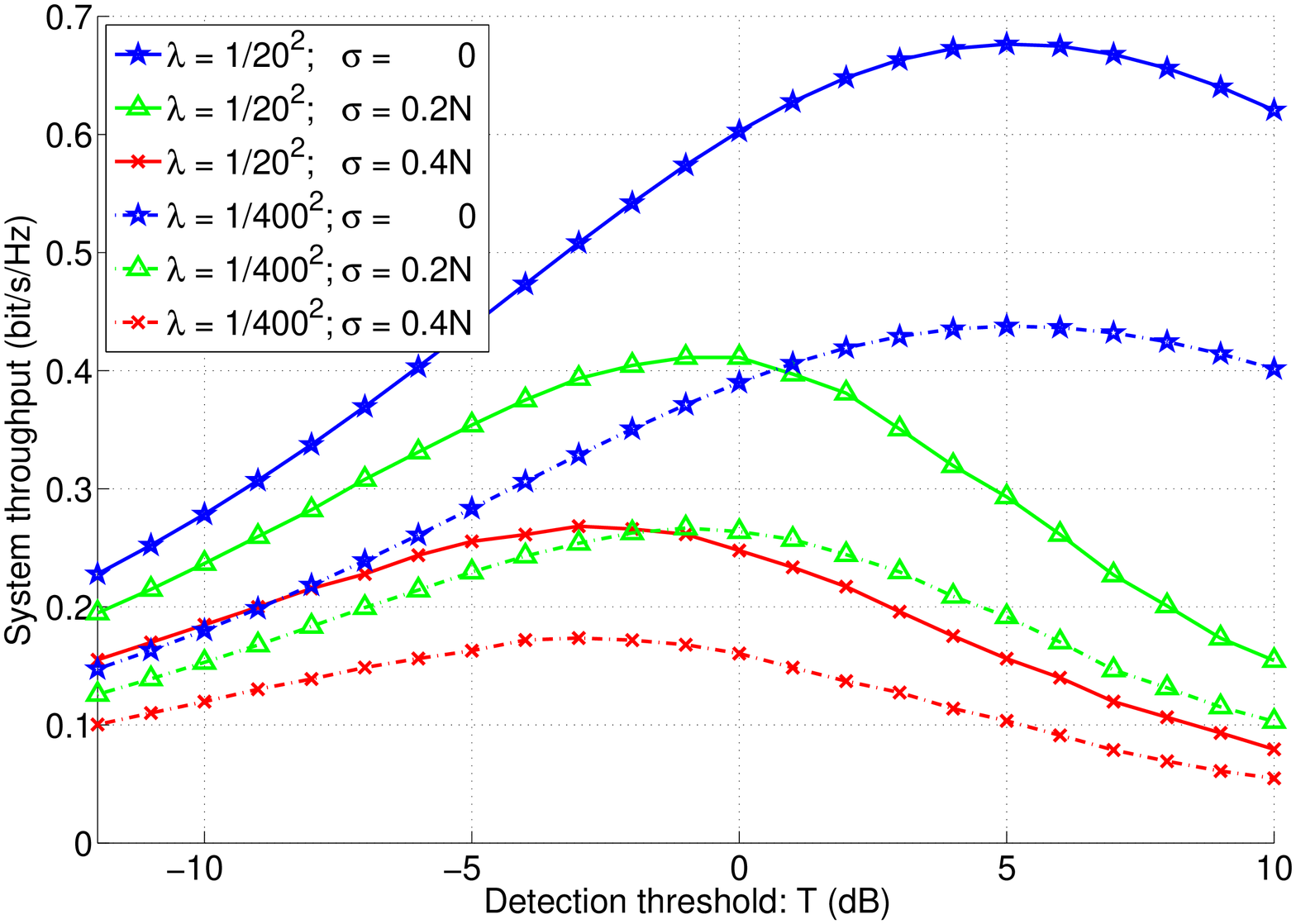}
\caption{System throughput versus detection threshold.}
\label{fig:10}
\end{figure}

\section{Solutions to Mitigating the Loss of Asynchronous Transmissions}
\label{sec:solutions}

In the previous section, we have seen that asynchronous transmissions may have a remarkable effect on the system-level performance. In this section we discuss four possible solutions, which differ in complexity and may be applicable in different scenarios, to mitigate the loss due to asynchronous transmissions.

\subsubsection{Extended cyclic prefix}

If the timing mismatches are concentrated in the range $[0, N_x)$ where $N_x > N_{cp}$, we can solve the timing misalignment problem by simply extending the length of the cyclic prefix beyond $N_x$. However, using cyclic prefix of extended length comes at the cost of more power and time spent in sending the cyclic prefix instead of being used to communicate data. This is a tradeoff, the characterization of which is beyond the scope of this paper.
The general principle is that this approach is applicable to the scenarios where $N_x$ is not too large.

\subsubsection{Advanced receiver timing}

If the timing mismatches are concentrated in the range $[-N_x, N_y)$ where $N_x, N_y >0$ and $N_x + N_y \leq N_{cp}$, we can solve the timing misalignment problem by simply advancing the receiver timing by $N_x$. Then the timing mismatches will be concentrated in the range $[0, N_x + N_y)$. As $N_x + N_y \leq N_{cp}$, there will be no loss due to the timing misalignment after shifting the receiver's timing earlier.  This approach is very simple but is only applicable to the scenarios where $N_x+N_y \leq N_{cp}$, and it also requires knowledge of $N_x$.

\subsubsection{Dynamic receiver timing positioning}

The receiver may estimate the timings used by each transmitter through either pilot-based or non pilot-based synchronization methods. Once a transmitter's timing is obtained, the receiver can adaptively adjust its receiving window to decode the transmitter's packet. Compared to the previous two approaches, dynamic receiver timing positioning is applicable to many more scenarios but at the cost of higher complexity. In particular, as the transmitters have i.i.d. timing mismatches, the typical receiver needs to estimate every transmitter's timing and accordingly positions its receiving window to decode a transmitter's packet.  

\subsubsection{Semi-static receiver timing positioning with multiple timing hypotheses}

Instead of estimating each transmitter's timing, the receiver may simply adopt multiple timing hypotheses: $-n_1 \Delta,...,0,...,n_2\Delta$, where $n\Delta$ denotes the timing difference between the hypothesis $n$ and the receiver's timing. For every timing hypothesis, the receiver accordingly adjusts its receiving window and performs decoding; the packets from the transmitters whose timings happen to be around the current timing hypothesis may be decoded. This semi-static receiver timing positioning approach reduces the complexity of dynamic receiver timing positioning but still requires the receiver to use multiple timing windows. Further, a careful choice of $n_1,n_2$ and $\Delta$ is important for the design. In general, the more the used timing hypotheses, the smaller the loss due to timing misalignment but the higher the complexity.

The above proposed solutions may be combined depending on the application scenarios. For example, advanced receiver timing may be jointly used with extended cyclic prefix to make the condition $N_x+N_y \leq N_{cp}$ hold. In practice, the design decision on which solution should be used or how they should be combined is best made based on the specific scenario under consideration. Note that if our target is not to decode as many transmitters as possible but for example is to decode the nearest transmitter, synchronizing directly with the nearest transmitter is of reasonable complexity and recovers the loss.

Let us consider the solution of semi-static receiver timing positioning with multiple timing hypotheses since it can be applied to many scenarios while having reasonable complexity. We take 
the mean number of decodable transmitters as the metric to evaluate its effectiveness. The following corollary immediately follows from Prop. \ref{pro:1}.
\begin{cor}
Denote by $\mathcal{H} = \{ -n_1 \Delta,...,0,...,n_2\Delta \}$ the set of timing hypotheses. The mean number of decodable transmitters is given by (\ref{eq:7}) but with $g(x)$ substituted by $\tilde{g}(x) \triangleq \max_{\tau \in \mathcal{H}} g(x - \tau)$.
\label{cor:2}
\end{cor}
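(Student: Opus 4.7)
The plan is to reduce Corollary \ref{cor:2} directly to Proposition \ref{pro:1} by showing that, under the system-level abstraction of Section \ref{sec:sinr}, using the set of hypotheses $\mathcal{H}$ is equivalent to applying the original analysis with the function $g(\cdot)$ replaced by $\tilde{g}(\cdot) = \max_{\tau \in \mathcal{H}} g(\cdot - \tau)$. So the body of the argument is really just a careful rewriting of $\mathrm{SINR}_i$ when the receiver positions its FFT window at any candidate offset, followed by taking the best window for each transmitter.

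First I would fix a transmitter $i$ with timing misalignment $D_i$ relative to the receiver's nominal timing. If the receiver instead aligns its FFT window with hypothesis $\tau \in \mathcal{H}$, the effective misalignment seen by the link-level analysis of Section \ref{sec:sinr} becomes $D_i - \tau$, so the useful-signal coefficient is $g(D_i - \tau)$. Under the system-level abstraction the interference contribution from every other transmitter $j$ is approximated by $G_j E$ regardless of the receiver's timing, and transmitter $i$'s self-interference contributes $(1 - g(D_i - \tau)) G_i E$. Hence the per-hypothesis $\mathrm{SINR}$ takes exactly the form of (\ref{eq:6}) with $g(D_i)$ replaced by $g(D_i - \tau)$.

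Second, I would observe that the mapping $g \mapsto \mathrm{SINR}$ defined by (\ref{eq:6}) is strictly increasing in the useful-signal coefficient (a one-line quotient-rule check:  $\partial / \partial g \; [ ga/((1-g)a + I)] = a(a+I)/((1-g)a+I)^2 > 0$, with $a = G_i E$ and $I = \sum_{j\neq i} G_j E + N_0$). Consequently, the decoding event $\{\exists\, \tau \in \mathcal{H}: \mathrm{SINR}_i(\tau) \geq T\}$ coincides with $\{\max_{\tau \in \mathcal{H}} g(D_i - \tau) \cdot G_i E / ((1 - \max_{\tau \in \mathcal{H}} g(D_i - \tau))G_i E + \sum_{j\neq i} G_j E + N_0) \geq T\}$, i.e., the event obtained from the original single-window expression by substituting $\tilde{g}(D_i)$ for $g(D_i)$.

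Third, I would repeat the Campbell/Slivnyak calculation that underlies Proposition \ref{pro:1} verbatim, since nothing about the interference field or the distribution of $D_i$ has changed, only the deterministic function that transforms the timing sample into the useful-signal weight. Because $\tilde{g}$ is a bounded measurable function of $D_i$ with the same domain $\mathbb{D}$, the conditioning on $D_i$, the Laplace-transform evaluation of $\sum_{j\neq i} \|X_j\|^{-\alpha} F_j$ via the PPP's probability generating functional, and the ensuing change of variables $v = \|X_i\|^2$ all carry through unchanged. The only modification is notational: replace $g$ by $\tilde{g}$ inside the integrand, which is precisely the claim. The one small point to be careful about is the domain of the outer integral over $\tau$; since $F_D$ is supported on $\mathbb{D}$ by assumption and $\tilde{g}(\tau)$ is well-defined for every $\tau \in \mathbb{D}$, the integral in (\ref{eq:7}) is meaningful with $\tilde{g}$ in place of $g$, and no further restriction on $\mathcal{H}$ is required. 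There is no genuine obstacle beyond keeping the $\tilde{g}$-substitution consistent throughout the proof of Proposition \ref{pro:1}.
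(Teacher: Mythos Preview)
Your argument is correct and follows the same idea the paper relies on: the paper merely states after the corollary that ``a transmitter is decodable as long as it is decodable under any of the used timing hypotheses,'' and leaves the reduction to Proposition~\ref{pro:1} implicit. You have filled in exactly those details---the shift of the effective misalignment to $D_i-\tau$, the monotonicity of the SINR in the useful-signal coefficient, and the verbatim reuse of the Campbell/Laplace computation---so your proposal is a rigorous version of the paper's one-line rationale rather than a different approach.
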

The rationale of Corollary \ref{cor:2} is straightforward: a transmitter is decodable as long as it is decodable under any of the used timing hypotheses. Fig. \ref{fig:11} shows the effectiveness of using multiple timing hypotheses. As expected, the more the used timing hypotheses, the more the mean number of decodable transmitters. Also, we can see from Fig. \ref{fig:11} that in this numerical example using $3$ timing hypotheses helps recover the majority of the loss.

\begin{figure}
\centering
\includegraphics[width=8.5cm]{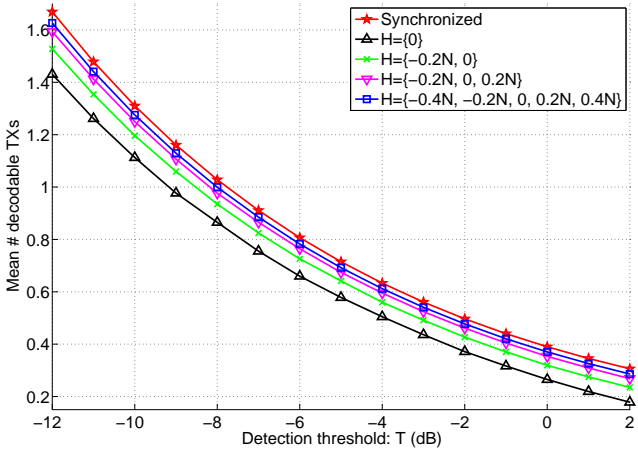}
\caption{Using semi-static receiver timing positioning with multiple timing hypotheses to mitigate the loss of asynchronous transmissions.}
\label{fig:11}
\end{figure}

\section{Conclusions}
\label{sec:conclusion}

In view of the lack of network-wide synchronization in many wireless networks, this paper has presented a baseline SINR model for asynchronous OFDM networks, which can be conveniently used in system-level studies. The model is then applied to characterize several important statistics in asynchronous PPP networks, including the number of decodable transmitters, the decoding probability of the nearest transmitter, and system throughput. The derived results complement existing analysis of synchronized networks using stochastic geometry. Further, this paper has compared and discussed four possible solutions for mitigating the loss of asynchronous transmissions.

This work can be extended in a number of ways. An extension from the studied flat-fading channels to frequency-selective channels would be highly desirable, since OFDM's main application is in such channels. It would also be of interest to explore scenarios where the notions of timing of different transmitters are not i.i.d. For example, a cluster of transmitters may synchronize with a common anchor node or base station and thus their notions of timing may become correlated. Also, if propagation delays are modeled, farther transmitters may likely have larger timing offsets with respect to the reference receiver, leading to non i.i.d. timing mismatches. One may further consider extending this work to study non-PPP network models and compare their performance to that of PPP model studied in this paper.

\appendix

\subsection{Derivation of Equations (\ref{eq:0}) and (\ref{eq:1})}
\label{app:deri}
We first derive (\ref{eq:0}). By the definition of discrete-time Fourier transform,  
\begin{align}
&Y[\ell; m] = \sum_{n=0}^{N-1} y[n;m] e^{- j 2 \pi \frac{\ell}{N} n } \notag \\
&= \sum_{n=0}^{N-1} s_i [n -D_i - N - N_{cp} ; m+1] e^{- j 2 \pi \frac{\ell}{N} n } \notag \\
&= \sum_{n=0}^{N-1}  \frac{\sqrt{E_{i}}}{N} \sum_{k} S_i [k; m+1] e^{ j 2 \pi \frac{k}{N} ( n - D_i - N - N_{cp}  ) } e^{- j 2 \pi \frac{\ell}{N} n } \notag \\
&= \sqrt{E_{i}} \sum_{k} S_i [k; m+1] e^{ j 2 \pi \frac{k}{N} (-D_i - N_{cp}) } \frac{1}{N} \sum_{n=0}^{N-1} e^{ j 2 \pi \frac{k-\ell}{N} n } \notag \\
&= \sqrt{E_{i}} e^{ j 2 \pi \frac{\ell}{N} (-D_i - N_{cp}) } S_i [\ell; m+1] ,
\end{align}
where we have plugged (\ref{eq:12}) in the second equality and used (\ref{eq:13}) in the third equality, and the last equality follows from the fact that $\frac{1}{N} \sum_{n=0}^{N-1} e^{ j 2 \pi \frac{k-\ell}{N} n } = \delta[k-\ell]$.

We next derive (\ref{eq:1}). Using the definition of discrete-time Fourier transform,  (\ref{eq:14}) and (\ref{eq:13}) yields
\begin{align}
&Y[\ell; m] = \sum_{n=0}^{N-1} y[n;m] e^{- j 2 \pi \frac{\ell}{N} n } \notag \\
&= \sum_{n=0}^{N-1 + D_i} s[-D_i+n; m] e^{- j 2 \pi \frac{\ell}{N} n } \notag \\
&+ \sum_{n=N + D_i}^{N-1} s[n - (N + D_i) - N_{cp}; m+1] e^{- j 2 \pi \frac{\ell}{N} n } \notag \\
&= \sum_{n=0}^{N-1 + D_i}    \frac{\sqrt{E_{i}}}{N} \sum_{k} S_i [k; m] e^{ j 2 \pi \frac{k}{N} (-D_i+n) } e^{- j 2 \pi \frac{\ell}{N} n } + \notag \\
&\sum_{n=N+D_i}^{N-1}    \frac{\sqrt{E_{i}}}{N} \sum_{k} S_i [k; m+1] e^{ j 2 \pi \frac{k}{N} (n - (N+D_i) - N_{cp}) } e^{- j 2 \pi \frac{\ell}{N} n } .
\label{eq:15}
\end{align}
The first sum in (\ref{eq:15}) equals
\begin{align}
&\sqrt{E_{i}}\frac{N+D_i}{N} S_i[\ell;m] e^{- j 2 \pi \frac{\ell}{N} D_i } \notag \\ 
&+ \sqrt{E_{i}} \frac{1}{N} \sum_{k \neq \ell} S_i [k; m] e^{- j 2 \pi \frac{k}{N} D_i } \sum_{n=0}^{N-1+D_i} e^{ j 2 \pi \frac{k-\ell}{N} n } ,
\end{align}
and the second sum in (\ref{eq:15}) equals
\begin{align}
&-\sqrt{E_{i}}\frac{D_i}{N} S_i[\ell;m+1] e^{ j 2 \pi \frac{\ell}{N} (-D_i-N_{cp}) } 
+ \sqrt{E_{i}} \frac{1}{N}  \notag \\ 
& \times \sum_{k \neq \ell} S_i [k; m+1] e^{ j 2 \pi \frac{k}{N} (-D_i-N_{cp}) } \sum_{n=N+D_i}^{N-1} e^{ j 2 \pi \frac{k-\ell}{N} n }.
\end{align}
Combining the above two results, and plugging in the following two equations
\begin{align}
&\sum_{n=0}^{N-1+D_i} e^{ j 2 \pi \frac{k-\ell}{N} n } = \frac{1- e^{ j 2 \pi \frac{k-\ell}{N}(N+D_i) }}{1- e^{ j 2 \pi \frac{k-\ell}{N} }} \notag \\
&\sum_{n=N+D_i}^{N-1} e^{ j 2 \pi \frac{k-\ell}{N} n } = \frac{ e^{ j 2 \pi \frac{k-\ell}{N} (N+D_i) } (1- e^{ j 2 \pi \frac{\ell-k}{N} D_i } )}{1- e^{ j 2 \pi \frac{k-\ell}{N} }},
\end{align}
we obtain (\ref{eq:1}).

\subsection{Proof of Proposition \ref{pro:1}}
\label{proof:pro:1}

For notational simplicity, denote by $I_{\Phi} = \sum_{X_j \in \Phi}   \|X_j\|^{-\alpha}  F_{j}$. Then by definition,
\begin{align}
&\mathbb{E} [  \Upsilon ] = \notag \\
&\mathbb{E} \bigg[  \sum_{X_i \in \Phi} \mathbb{I} \bigg(  
 \frac{ g(D_i)   \|X_i\|^{-\alpha}  F_{i}  }{ (1- g(D_i))   \|X_i\|^{-\alpha}  F_{i}  + I_{\Phi - \delta_{X_i}}  + N_0 /E } \geq T \bigg) \bigg] \notag \\
&= \int_{\mathbb{R}^2} \int_{\mathbb{D}}  \int_{\mathbb{R}}  \mathbb{E} \left[  \mathbb{I} \left( \frac{ g(\tau)   \|x\|^{-\alpha}  f  }{ (1- g(\tau))   \|x\|^{-\alpha}  f  + I_{\Phi}  + N_0 /E } \geq T \right)   \right ] \notag \\
& \quad \quad \quad   \quad \quad F_F(\dint f) F_D(\dint \tau)  M( \dint x)  ,
\end{align}
where $M(\cdot)$ is the mean measure of the PPP $\Phi$, i.e., $M(A) = \mathbb{E}[\Phi(A)]$ for any measurable set $A \subset \mathbb{R}^2$, and we have used the reduced Campbell formula for the PPP \cite{baccelli2009stochastic} in the last equality. Noting that $F_i$'s are i.i.d. Rayleigh fading, $F_F(\dint f) = e^{-f} \dint f, f \geq 0$. For the homogeneous PPP $\Phi$, $M(\dint x) = \lambda \dint x$. Using these two facts and changing the integral with respect to $x \in \mathbb{R}^2$ into polar coordinates, we have
\begin{align}
&\mathbb{E} [  \Upsilon ] 
= 2\pi \lambda \int_0^\infty \int_{\mathbb{D}}     \int_0^\infty \mathbb{E} \bigg[ \mathbb{I}\bigg(  \notag \\
& \quad
\frac{ g(\tau)   r^{-\alpha}  f  }{ (1- g(\tau)) r^{-\alpha}  f  + I_{\Phi}  + N_0 /E } \geq T \bigg) \bigg]  e^{-f}\dint f   F_D(\dint \tau)  r\dint r \notag \\
&=  2\pi \lambda \int_0^\infty \int_{\mathbb{D}}     \int_0^\infty  \mathbb{E} \bigg[ \mathbb{I} \left( g(\tau) > \frac{T}{1+T} \right) \notag \\
& \quad \quad \quad \times
\mathbb{I}\left(  f \geq r^\alpha h(\tau, T)( I_{\Phi}  + N_0 /E  )  \right) \bigg]  e^{-f}\dint f   F_D(\dint \tau) r\dint r   \notag \\
&=  2\pi \lambda \int_0^\infty   \int_{\mathbb{D}} \mathbb{I} \left( g(\tau) > \frac{T}{1+T} \right) \notag \\
& \quad \quad \quad  \quad \quad \times
e^{ -r^\alpha h(\tau, T) N_0 /E  }   
 \mathbb{E} \left[   e^{ - r^\alpha h(\tau, T) I_{\Phi}  }  \right]    F_D(\dint \tau)  r\dint r \notag \\
&=  2\pi \lambda \int_{\mathbb{D}}  \int_0^\infty  \mathbb{I} \left( g(\tau) > \frac{T}{1+T} \right) e^{ -r^\alpha h(\tau, T) N_0 /E  } \notag \\
& \quad \quad \quad  \quad \quad \times
  e^{  -  \lambda \pi \textrm{sinc}^{-1}\left( \frac{2}{\alpha} \right)  (h(\tau,T))^{\frac{2}{\alpha}} r^2    }   r\dint r   F_D(\dint \tau) ,
\label{eq:8}
\end{align}
where we have used the shorthand function $h(\tau, T)$ in the second equality and applied in the last equality the Laplace transform of the interference generated by a Poisson field of interferers with Rayleigh fading \cite{haenggi2009interference}:
\begin{align}
\mathcal{L}_{I_{\Phi}} (s) \triangleq \mathbb{E} [ e^{-s I_{\Phi} } ] = \exp \left(  -  \frac{\lambda \pi s^{\frac{2}{\alpha}} }{\textrm{sinc} \left( \frac{2}{\alpha} \right) }  \right).
\end{align}
With a change of variables $r^2 \to v$ in (\ref{eq:8}), we obtain (\ref{eq:7}) and complete the proof. 

\subsection{Proof of Proposition \ref{pro:3}}
\label{proof:pro:3}

The set of transmitters in $\Phi$ whose packets can be decoded can be upper bounded as 
\begin{align}
\tilde{\Phi} &= \sum_{X_i \in \Phi} \delta_{X_i}  \mathbb{I} ( \sinr_i \geq T )  \notag \\
&\leq \sum_{X_i \in \Phi} \delta_{X_i}  \mathbb{I} \left( \frac{ g(D_i)  F_i \|X_i\|^{-\alpha}   }{  N_0 /E} \geq T \right) \mathbb{I} \left( g(D_i) > \frac{T}{1+T} \right) \notag \\
&\triangleq \tilde{\Phi}^{(\textrm{u})} .
\end{align}
Note that given $\Phi$ the Bernoulli random variables $\mathbb{I} \left( \frac{ g(D_i)  F_{i} \|X_i\|^{-\alpha}   }{  N_0 /E} \geq T \right) \mathbb{I} \left( g(D_i) > \frac{T}{1+T} \right), i=1,2,...,$ are independent. It follows that $\tilde{\Phi}^{(\textrm{u})}$ is an independent thinning of $\Phi$ with thinning probability
\begin{align}
&p (x) = \mathbb{E} \left[ \mathbb{I} \left( \frac{ g(D_i)  F_{i} \|X_i\|^{-\alpha}   }{  N_0 /E} \geq T \right) \mathbb{I} \left( g(D_i) > \frac{T}{1+T} \right) \right]  \notag \\
&= \mathbb{E}_D \left[ \mathbb{I} \left( g(D) > \frac{T}{1+T} \right) \exp \left( - \frac{T \|x\|^{\alpha}}{  g(D) \snr  } \right)  \right],
\end{align}
where we have used the independence of $D$ and $F$, and $F \sim \exp (1)$. Therefore, $\tilde{\Phi}^{(\textrm{u})}$ is a PPP with intensity measure
$
\Lambda (A) = \int_A p (x) \lambda \dint x.
$
Further, $\Upsilon^{(\textrm{u})} = \tilde{\Phi}^{(\textrm{u})} (\b R^2)$ is Poisson with parameter 
\begin{align}
&\Lambda (\b R^2) = \int_{\b R^2} p (x) \lambda \dint x \notag \\
&= \int_{\b R^2}  \mathbb{E}_D \left[ \mathbb{I} \left( g(D) > \frac{T}{1+T} \right) \exp \left( - \frac{T \|x\|^{\alpha}}{  g(D) \snr  } \right)  \right] \lambda \dint x \notag \\
&= \pi \lambda \int_0^{\infty} \mathbb{E}_D \left[ \mathbb{I} \left( g(D) > \frac{T}{1+T} \right) \exp \left( - \frac{T v^{\frac{\alpha}{2}}}{  g(D) \snr  } \right)  \right]    \dint v  . \notag 
\end{align}

Next we show that $\Upsilon^{(\textrm{u})}$ can be truncated at  $\lfloor \frac{1+T}{T} \rfloor$, following a similar argument as in \cite{baccelli2009stochastic}. To this end, suppose there are $n$ decodable transmitters, without loss of generality assumed to be $X_0, ..., X_{n-1}$. Denoting by $\tilde{I} = I_{\Phi - \cup_{j=0}^{n-1} \delta_{X_j}}  + N_0 /E$, Then we have
\begin{align}
\frac{ g(D_i)   \|X_i\|^{-\alpha}  F_{i}  }{  -g(D_i)   \|X_i\|^{-\alpha}  F_{i}  + \sum_{j=0}^{n-1} \|X_j\|^{-\alpha}  F_{j} + \tilde{I}  } \geq T, 
\end{align}
for $i=0,...,n-1 $,
which implies that
\begin{align}
\frac{    \|X_i\|^{-\alpha}  F_{i}  }{  \sum_{j=0, j\neq i}^{n-1} \|X_j\|^{-\alpha}  F_{j} +  \tilde{I} } \geq T, 
\end{align}
for $i=0,...,n-1 $.
With some algebraic manipulations, we have the following set of inequalities:
\begin{align}
  (1+T)\|X_i\|^{-\alpha}  F_{i}  \geq T(  \sum_{j=0}^{n-1} \|X_j\|^{-\alpha}  F_{j} +  \tilde{I} ),  
\end{align}
for $i=0,...,n-1 $.
Summing the above set of inequalities,
\begin{align}
(1+T) \sum_{j=0}^{n-1} \|X_j\|^{-\alpha}  F_{j}  &\geq nT (\sum_{j=0}^{n-1} \|X_j\|^{-\alpha}  F_{j} +  \tilde{I} ) \notag \\
&> nT \sum_{j=0}^{n-1} \|X_j\|^{-\alpha}  F_{j} .
\end{align}
It follows that $n \leq \lfloor \frac{1+T}{T} \rfloor$, and thus the proposition has been proven.

\subsection{Proof of Proposition \ref{pro:2}}
\label{proof:pro:2}

To begin with, we condition on the location of the nearest transmitter $X_0 = x = (r,\theta)$ and its associated fading $F_0 = f$ and timing misalignment $D_0=\tau$. Then
\begin{align}
&\mathbb{P} ( \textrm{SINR}_0 \geq T | X_0 = x,  F_{0} = f, D_0 = \tau )  \notag \\
&= \mathbb{P}  \bigg( \frac{ g(\tau)   \|x\|^{-\alpha}  f  }{ (1- g(\tau))   \|x\|^{-\alpha}  f  + I_{\Phi- \delta_x}  + N_0 /E }   \geq T  \notag \\
& \quad \quad \quad  \quad  \quad \big| X_0 = x,  F_{0} = f, D_0 = \tau \bigg)     \notag \\
&= \mathbb{P} \bigg( f \geq r^\alpha h(\tau, T)( I_{\Phi- \delta_x}  + N_0 /E  ) \notag \\
& \quad \quad \quad  \quad  \quad \big| X_0 = x,  F_0 = f, D_0 = \tau \bigg) \mathbb{I} \left( g(\tau) > \frac{T}{1+T} \right)   \notag \\
&= \mathbb{P}^{x,f, \tau} \bigg( f \geq r^\alpha h(\tau, T) ( I_{\Phi - \delta_x} + N_0/E ) \notag \\
& \quad \quad \quad  \quad  \quad 
\big| \Phi ( B (o,r) ) = 0 \bigg ) \mathbb{I} \left( g(\tau) > \frac{T}{1+T} \right)  ,
\label{eq:9}
\end{align}
where $ \mathbb{P}^{x,f, \tau} (\cdot )$ denotes the Palm distribution with respect to $\Phi$, i.e., the probability law conditioned on that there exists a point at location $x$ with the marks $f$ and $\tau$. Note that, conditioned on that the nearest point is located in $x$, there are no other points in $\Phi$ located in the ball $B(o,r)$ centered at $o$ with radius $r$, i.e., $\Phi ( B (o,r) ) = 0$. This condition has been made explicitly  
in (\ref{eq:9}). Further, the first term in (\ref{eq:9}) equals
\begin{align}
&\mathbb{P}^{x,f,\tau} \bigg( f \geq r^\alpha h(\tau, T) ( I_{\Phi \cap B^c (o,r) - \delta_x} + N_0/E ) \notag \\
& \quad \quad \quad  \quad  \quad 
\big| \Phi ( B (o,r) ) = 0 \bigg)  \notag \\
&= \mathbb{P}^{x,f,\tau} ( f \geq r^\alpha h(\tau, T) ( I_{\Phi \cap B^c (o,r) - \delta_x} + N_0/E )   ) \notag \\
&= \mathbb{P} ( f \geq r^\alpha h(\tau, T) ( I_{\Phi \cap B^c (o,r) } + N_0/E )   ) .
\label{eq:10}
\end{align}
The first equality in (\ref{eq:10}) is due to the independence of $I_{\Phi \cap B^c (o,r) - \delta_x}$ and $\Phi ( B (o,r) ) = 0$, which follows from the complete independence property of PPP. The second equality in (\ref{eq:10}) is due to Slivnyak-Mecke Theorem \cite{baccelli2009stochastic}.

Following a similar derivation as in \cite{andrews2011tractable}, we can uncondition on $F_0 = f$ and $X_0=x$ to obtain
\begin{align}
&\mathbb{P} ( \textrm{SINR}_0 \geq T | D_0 = \tau )  
= \pi \lambda \mathbb{I} \left( g(\tau) > \frac{T}{1+T} \right)  \notag \\
& \times \int_0^\infty e^{ -v^{\frac{\alpha}{2}} h(\tau, T) N_0 /E  } e^{ -\pi \lambda v ( 1 + \rho (h(\tau, T) ,\alpha ) ) } \dint v  ,
\end{align}
where $\rho (t, \alpha)$ is defined in Prop. \ref{pro:2}. Unconditioning further on $D = \tau$ yields (\ref{eq:11}). This completes the proof.

\bibliographystyle{IEEEtran}
\bibliography{IEEEabrv,first}

\begin{thebibliography}{10}
\providecommand{\url}[1]{#1}
\csname url@samestyle\endcsname
\providecommand{\newblock}{\relax}
\providecommand{\bibinfo}[2]{#2}
\providecommand{\BIBentrySTDinterwordspacing}{\spaceskip=0pt\relax}
\providecommand{\BIBentryALTinterwordstretchfactor}{4}
\providecommand{\BIBentryALTinterwordspacing}{\spaceskip=\fontdimen2\font plus
\BIBentryALTinterwordstretchfactor\fontdimen3\font minus
  \fontdimen4\font\relax}
\providecommand{\BIBforeignlanguage}[2]{{%
\expandafter\ifx\csname l@#1\endcsname\relax
\typeout{** WARNING: IEEEtran.bst: No hyphenation pattern has been}%
\typeout{** loaded for the language `#1'. Using the pattern for}%
\typeout{** the default language instead.}%
\else
\language=\csname l@#1\endcsname
\fi
#2}}
\providecommand{\BIBdecl}{\relax}
\BIBdecl

\bibitem{lin2013overview}
X.~Lin, J.~G. Andrews, A.~Ghosh, and R.~Ratasuk, ``An overview of {3GPP}
  device-to-device proximity services,'' \emph{IEEE Communications Magazine},
  vol.~52, no.~4, pp. 40--48, April 2014.

\bibitem{vasudevan2005neighbor}
S.~Vasudevan, J.~Kurose, and D.~Towsley, ``On neighbor discovery in wireless
  networks with directional antennas,'' in \emph{Proceedings of IEEE INFOCOM},
  vol.~4, 2005, pp. 2502--2512.

\bibitem{keeler2013sinr}
H.~P. Keeler, B.~Blaszczyszyn, and M.~K. Karray, ``{SINR}-based k-coverage
  probability in cellular networks with arbitrary shadowing,'' in
  \emph{Proceedings of IEEE International Symposium on Information Theory
  (ISIT)}, 2013, pp. 1167--1171.

\bibitem{baccelli2009stochastic}
F.~Baccelli and B.~Blaszczyszyn, \emph{Stochastic Geometry and Wireless
  Networks - Part I: Theory}.\hskip 1em plus 0.5em minus 0.4em\relax Now
  Publishers Inc., 2009.

\bibitem{blaszczyszyn2013using}
B.~Blaszczyszyn, M.~K. Karray, and H.~P. Keeler, ``Using {Poisson} processes to
  model lattice cellular networks,'' in \emph{Proceedings of IEEE INFOCOM},
  2013, pp. 773--781.

\bibitem{GuoHae14}
A.~Guo and M.~Haenggi, ``Asymptotic deployment gain: A simple approach to
  characterize the {SINR} distribution in general cellular networks,''
  \emph{submitted}, April 2014. Available at http://arxiv.org/abs/1404.6556.

\bibitem{baccelli2006aloha}
F.~Baccelli, B.~Blaszczyszyn, and P.~Muhlethaler, ``An {Aloha} protocol for
  multihop mobile wireless networks,'' \emph{IEEE Transactions on Information
  Theory}, vol.~52, no.~2, pp. 421--436, February 2006.

\bibitem{andrews2011tractable}
J.~G. Andrews, F.~Baccelli, and R.~K. Ganti, ``A tractable approach to coverage
  and rate in cellular networks,'' \emph{IEEE Transactions on Communications},
  vol.~59, no.~11, pp. 3122--3134, November 2011.

\bibitem{wu2013flashlinq}
X.~Wu, S.~Tavildar, S.~Shakkottai, T.~Richardson, J.~Li, R.~Laroia, and
  A.~Jovicic, ``{FlashLinQ: A} synchronous distributed scheduler for
  peer-to-peer ad hoc networks,'' \emph{IEEE/ACM Transactions on Networking},
  vol.~21, no.~4, pp. 1215--1228, August 2013.

\bibitem{baccelli2012optimizing}
F.~Baccelli, J.~Li, T.~Richardson, S.~Shakkottai, S.~Subramanian, and X.~Wu,
  ``On optimizing {CSMA} for wide area ad hoc networks,'' \emph{Queueing
  Systems}, vol.~72, no. 1-2, pp. 31--68, October 2012.

\bibitem{erturk2013distributions}
M.~C. Erturk, S.~Mukherjee, H.~Ishii, and H.~Arslan, ``Distributions of
  transmit power and {SINR} in device-to-device networks,'' \emph{IEEE
  Communications Letters}, vol.~17, no.~2, pp. 273--276, February 2013.

\bibitem{lin2013multicast}
X.~Lin, R.~Ratasuk, A.~Ghosh, and J.~G. Andrews, ``Modeling, analysis and
  optimization of multicast device-to-device transmissions,'' \emph{IEEE
  Transactions on Wireless Communications}, vol.~13, no.~8, pp. 4346--4359,
  August 2014.

\bibitem{lin2013comprehensive}
X.~Lin, J.~G. Andrews, and A.~Ghosh, ``Spectrum sharing for device-to-device
  communication in cellular networks,'' \emph{IEEE Transactions on Wireless
  Communications}, to appear, 2014.

\bibitem{li2014message}
Y.~Li and W.~Wang, ``Message dissemination in intermittently connected {D2D}
  communication networks,'' \emph{IEEE Transactions on Wireless
  Communications}, vol.~13, no.~7, pp. 3978--3990, July 2014.

\bibitem{haenggi2009stochastic}
M.~Haenggi, J.~G. Andrews, F.~Baccelli, O.~Dousse, and M.~Franceschetti,
  ``Stochastic geometry and random graphs for the analysis and design of
  wireless networks,'' \emph{IEEE Journal on Selected Areas in Communications},
  vol.~27, no.~7, pp. 1029--1046, September 2009.

\bibitem{elsawy2013stochastic}
H.~ElSawy, E.~Hossain, and M.~Haenggi, ``Stochastic geometry for modeling,
  analysis, and design of multi-tier and cognitive cellular wireless networks:
  A survey,'' \emph{IEEE Communications Surveys \& Tutorials}, vol.~15, no.~3,
  pp. 996--1019, July 2013.

\bibitem{pollet1994ber}
T.~Pollet, P.~Spruyt, and M.~Moeneclaey, ``The {BER} performance of {OFDM}
  systems using non-synchronized sampling,'' in \emph{Proceedings of IEEE
  GLOBECOM}, 1994, pp. 253--257.

\bibitem{schmidl1997robust}
T.~M. Schmidl and D.~C. Cox, ``Robust frequency and timing synchronization for
  {OFDM},'' \emph{IEEE Transactions on Communications}, vol.~45, no.~12, pp.
  1613--1621, December 1997.

\bibitem{speth1999optimum}
M.~Speth, S.~A. Fechtel, G.~Fock, and H.~Meyr, ``Optimum receiver design for
  wireless broad-band systems using {OFDM - Part I},'' \emph{IEEE Transactions
  on Communications}, vol.~47, no.~11, pp. 1668--1677, November 1999.

\bibitem{wang2003ser}
X.~Wang, T.~T. Tjhung, Y.~Wu, and B.~Caron, ``{SER} performance evaluation and
  optimization of {OFDM} system with residual frequency and timing offsets from
  imperfect synchronization,'' \emph{IEEE Transactions on Broadcasting},
  vol.~49, no.~2, pp. 170--177, June 2003.

\bibitem{mostofi2006mathematical}
Y.~Mostofi and D.~C. Cox, ``Mathematical analysis of the impact of timing
  synchronization errors on the performance of an {OFDM} system,'' \emph{IEEE
  Transactions on Communications}, vol.~54, no.~2, pp. 226--230, February 2006.

\bibitem{tonello2000analysis}
A.~M. Tonello, N.~Laurenti, and S.~Pupolin, ``Analysis of the uplink of an
  asynchronous multi-user {DMT OFDMA} system impaired by time offsets,
  frequency offsets, and multi-path fading,'' in \emph{Proceedings of IEEE
  Vehicular Technology Conference (VTC)}, vol.~3, 2000, pp. 1094--1099.

\bibitem{el2001ofdm}
M.~S. El-Tanany, Y.~Wu, and L.~H{\'a}zy, ``{OFDM} uplink for interactive
  broadband wireless: {Analysis} and simulation in the presence of carrier,
  clock and timing errors,'' \emph{IEEE Transactions on Broadcasting}, vol.~47,
  no.~1, pp. 3--19, March 2001.

\bibitem{park2003performance}
M.~Park, K.~Ko, H.~Yoo, and D.~Hong, ``Performance analysis of {OFDMA} uplink
  systems with symbol timing misalignment,'' \emph{IEEE Communications
  Letters}, vol.~7, no.~8, pp. 376--378, August 2003.

\bibitem{raghunath2009sir}
K.~Raghunath and A.~Chockalingam, ``{SIR} analysis and interference
  cancellation in uplink {OFDMA} with large carrier frequency/timing offsets,''
  \emph{IEEE Transactions on Wireless Communications}, vol.~8, no.~5, pp.
  2202--2208, May 2009.

\bibitem{hamdi2009interference}
K.~A. Hamdi and Y.~M. Shobowale, ``Interference analysis in downlink {OFDM}
  considering imperfect intercell synchronization,'' \emph{IEEE Transactions on
  Vehicular Technology}, vol.~58, no.~7, pp. 3283--3291, September 2009.

\bibitem{medjahdi2011performance}
Y.~Medjahdi, M.~Terr{\'e}, D.~Le~Ruyet, D.~Roviras, and A.~Dziri, ``Performance
  analysis in the downlink of asynchronous {OFDM/FBMC} based multi-cellular
  networks,'' \emph{IEEE Transactions on Wireless Communications}, vol.~10,
  no.~8, pp. 2630--2639, August 2011.

\bibitem{hamdi2010precise}
K.~A. Hamdi, ``Precise interference analysis of {OFDMA} time-asynchronous
  wireless ad-hoc networks,'' \emph{IEEE Transactions on Wireless
  Communications}, vol.~9, no.~1, pp. 134--144, January 2010.

\bibitem{3gppD2dRF}
{3GPP}, ``3rd generation partnership project; technical specification group
  radio access network; study on {LTE} device to device proximity services;
  radio aspects (release 12),'' \emph{TR 36.843 V12.0.1}, March 2014.

\bibitem{tuomaala2005effective}
E.~Tuomaala and H.~Wang, ``Effective {SINR} approach of link to system mapping
  in {OFDM}/multi-carrier mobile network,'' in \emph{International Conference
  on Mobile Technology, Applications and Systems}, 2005, pp. 1--5.

\bibitem{haenggi2009interference}
M.~Haenggi and R.~K. Ganti, \emph{Interference in large wireless
  networks}.\hskip 1em plus 0.5em minus 0.4em\relax Now Publishers Inc., 2009.

\end{thebibliography}

\end{document}